\newcommand{\isabelle}{\textsc{Isabelle}\xspace}
\newcommand{\maude}{\textsc{Maude}\xspace}
\newcommand{\coq}{\textsc{Coq}\xspace}
\newcommand{\coqformde}{\textsc{Coq4MDE}\xspace}
\newcommand{\reuseware}{\textsc{ReuseWare}\xspace}
\newcommand{\model}{\texttt{Model}\xspace}
\newcommand{\metamodel}{\texttt{MetaModel}\xspace}
\newcommand{\mde}{\texttt{MDE}\xspace}
\newcommand{\mof}{\texttt{MOF}\xspace}
 \newcommand{\isc}{\texttt{ISC}\xspace}
 \newcommand{\uml}{\texttt{UML}\xspace}
 \newcommand{\ocl}{\texttt{OCL}\xspace}
 \newcommand{\java}{\texttt{Java}\xspace}
 \newcommand{\bind}{\texttt{bind}\xspace}
 \newcommand{\extend}{\texttt{extend}\xspace}
\newtheorem{theorem}{Theorem}
\newtheorem{definition}{Definition}
\newtheorem{lemma}{Lemma}
\title{Correct-by-construction model composition\\
Application to the Invasive Software Composition method}
\author{Mounira Kezadri Hamiaz
\institute{Université de Toulouse, IRIT, France}
\email{mounira.kezadri@enseeiht.fr}
\and
Benoit Combemale
\institute{ Université de Rennes 1, IRISA, France}
\email{benoit.combemale@irisa.fr}
\and
Marc Pantel
\institute{Université de Toulouse, IRIT, France}
\email{marc.pantel@enseeiht.fr}
\and
Xavier Thirioux
\institute{Université de Toulouse, IRIT, France}
\email{xavier.thirioux@enseeiht.fr}
}
\begin{document}
\maketitle

\tikzset{node distance=3cm, auto}

\maketitle

\begin{abstract}
Composition technologies improve reuse in the development of
large-scale complex systems. Safety critical systems require intensive
validation and verification activities. These activities should be
compositional in order to reduce the amount of residual verification activities
that must be conducted on the composite in addition to the ones
conducted on each components. In order to ensure the correctness of
compositional verification and assess the minimality of the residual
verification, the contribution proposes to use formal specification
and verification at the composition operator level. A first experiment
was conducted in \cite{kezadri2011proof} using proof assistants to
formalize the generic composition technology \isc and prove that type
checking was compositional. This contribution extends our early work
to handle full model conformance and study the mandatory residual
verification. It shows that \isc operators are not fully compositional
with respect to conformance and provides the minimal preconditions on
the operators mandatory to ensure compositional conformance. The
appropriate operators from \isc (especially \bind) have been
implemented in the \coqformde framework that provides a full
implementation of \mof in the \coq proof assistant. Expected
properties, respectively residual verification, are expressed as post,
respectfully pre, conditions for the composition operators. The
correctness of the compositional verification is proven in \coq.
\end{abstract}

\section{Introduction}

Composition technologies improve reuse in the development of
large-scale complex systems. Safety critical systems require intensive
validation and verification activities. These activities should be
compositional in order to reduce the amount of residual activities
that must be conducted on the composite in addition to the ones
conducted on each components. In order to ensure the correctness of
compositional verification and assess the minimality of the residual
verification, the contribution proposes to use formal specification
and verification at the composition operator level. 

A first experiment was conducted in \cite{kezadri2011proof} using
proof assistants to formalize the generic composition technology \isc
\cite{assmann2003invasive} and especially the \bind and \extend
operators. This generic composition method enables to enrich the
models to express composition interfaces and to assemble the generated
components using some composition operators. Type checking for models
based on metamodels was proved to be compositional for these
operators. However, the implementation of operators in \isc does not
take into account other semantics properties for the conformance
relation for metamodels and inconsistent models can be generated.

This contribution extends our early work to handle full
model conformance and study the mandatory residual verification. It
shows that ISC operators are not fully compositional with respect to
conformance and provides the minimal preconditions on the operators
mandatory to ensure compositional conformance. The appropriate
operators from \isc (especially \bind) have been implemented in the
\coqformde framework that provides a full implementation of \mof in the
\coq proof assistant. Expected properties, respectively residual
verification, are expressed as post, respectively pre, conditions for
the composition operators. The correctness of the compositional
verification is proven in \coq.

This paper focuses on an evolution of the \isc operators (especially
the \bind operator) to correct the inconsistencies in the first
implementation that allowed to build model compositions that do not
conform to the composite metamodel. It also gives the verification for
some generic semantics properties of the \mof metamodel conformance
relation \cite{omg2011mof}.

This first section has given a short introduction. The second section presents the required notions about \coqformde. To motivate the
evolution of the \isc operators and the associated proofs, the third
section gives an example of an inconsistent metamodel assembled by the
\reuseware\footnote{\url{http://www.reuseware.org}} \cite{henriksson2008extending} \cite{Heidenreich2009} plugin from
consistent metamodels. The fourth section first discusses the formalization
of the \bind operator, then presents some preconditions for the
conformance verification and the associated proofs. The fifth section
presents some related work. Finally, the last section concludes and
gives some perspectives.

\section{Coq4MDE}\label{Coq4MDE}
This section gives the main insight of our \mde framework \coqformde,
derived from \cite{towers07}.  It defines principally the notions of \texttt{Model}
and \texttt{MetaModel}.

In our framework, the concept of metamodel is not a specialization
of the concept of model.  A model is the instance level and a
metamodel is a modeling language used to define
models. Both are formally defined in the following
way.  Let us consider two sets of labels: $\texttt{Classes}$,
respectively $\texttt{References}$, represents the set of all possible
class, respectively reference labels.  Then, let us consider
instances of such classes, the set $\texttt{Objects}$ of object
labels. $\texttt{References}$ includes a specific $inh$ label used to
specify the inheritance relation.  In the next sections, we will
elide the word label and directly talk about classes, references
and objects.
\begin{definition}[Model]
Let $\mathscr{C} \subseteq \texttt{Classes}$ be a set of classes. \\
Let $\mathscr{R} \subseteq \{ \langle c_1, r, c_2\rangle ~|~ c_1, c_2
\in \mathscr{C}, r \in\, $\texttt{Ref\-erences}$\}$\footnote{$\langle c_1,
  c_2, r\rangle$ in the \coq code is denoted here for simplification as:  $\langle
  c_1, r, c_2\rangle$.} be a set of references between classes.

A \texttt{Model} over $\mathscr{C}$ and $\mathscr{R}$, written $\langle MV,ME\rangle \in Model(\mathscr{C},\mathscr{R})$ is
a multigraph built over a finite set $MV$ of typed object vertices and a finite set $ME$\footnote{$\langle \langle o_1, c_1\rangle,r,\langle o_2, c_2\rangle\rangle$ is denoted in the \coq code as: $\langle \langle o_1, c_1\rangle,\langle o_2, c_2\rangle, r\rangle \rangle$.}  of reference edges such that:
\[
\begin{array}{l}
MV \subseteq \left\{ \langle o, c\rangle  ~|~ o\in \texttt{Objects}, c\in \mathscr{C} \right\}\\
ME \subseteq \left\{ 
  \begin{array}{l | l}
    \langle \langle o_1, c_1\rangle,r,\langle o_2, c_2\rangle\rangle &
    \begin{array}{l}
      \langle o_1, c_1\rangle, \langle o_2, c_2\rangle \in MV,
      \langle c_1, r, c_2 \rangle \in \mathscr{R}
    \end{array}
  \end{array}
\right\}
\end{array}
\]
\end{definition}
Note that, in case of inheritance, the same object label will be used
several times in the same model graph. It will be associated to the
different classes in the inheritance hierarchy going from a root to
the class used to create the object. This label reuse encodes the
inheritance polymorphism, a key aspect of most OO languages.
Inheritance is represented in the metamodel with a special reference 
called $inh$. The \texttt{subClass} property is presented in the Section~\ref{V}.

\begin{definition}[Metamodel]
  A \texttt{MetaModel} is a multigraph representing classes as vertices and references as edges as
  well as semantic properties over instantiation of classes and
  references.  It is represented as a pair composed of a multigraph
  $(MMV,MME)$ built over a finite set $MMV$ of vertices and a
  finite set $MME$ of edges, and of a predicate
  over models representing the semantic properties.
  
  A \texttt{MetaModel} is a pair $\langle (MMV,MME),conformsTo\rangle$ such that:
  \[
  \begin{array}{l}
    MMV \subseteq \texttt{Classes}\\
    MME \subseteq \{ \langle c_1, r, c_2\rangle ~|~ c_1, c_2 \in MMV, r \in \texttt{References}\}\\
    conformsTo : Model(MMV,MME) \rightarrow Bool
  \end{array}
  \]
\end{definition}
Given one \texttt{Model} $M$ and one \texttt{MetaModel} $MM$, we can check conformance using
the $conformsTo$ predicate embedded in $MM$.
It identifies the set of valid models with respect to a metamodel. 

In a prospect to construct a formal framework for model composition,
we extend the previous \mde framework to formalize and prove the
properties preservation for the \isc basic composition operators implemented inside the \reuseware framework. 

\section{An example of inconsistent metamodel generated by \reuseware}
\isc is a generic technology for extending a DSML with model composition facilities. Its first version was defined to compose \java programs and was implemented in the COMPOST system\footnote{http://www.the-compost-system.org}. A universal extension called U-ISC was proposed in \cite{Heidenreich2009}, this technique deals with textual components that can be described using context-free grammars and then the fragments are represented as trees. The method as presented considers tree merging for the composition. Recently, in order to deal with graphical languages the method was extended to support typed graphs in \cite{henriksson2008extending}, this method was implemented in the \reuseware framework. This last implementation is consistent with the description of models as graphs in our \coqformde framework.

Using the \reuseware plugin, the composition of the two models
presented in Figures~\ref{m1} and \ref{m2} by the composition program
presented in Figure~\ref{cp} generates the model presented in
Figure~\ref{cr}.

\begin{figure}[H]
\vspace{-3ex}
\begin{minipage}{0.55\linewidth}
 \centering
\includegraphics[width=0.9\textwidth]{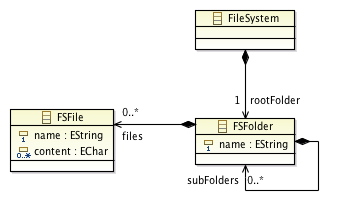}
 \vspace{-2ex}
 \caption{The advice model}
 \label{m1}
 \end{minipage}
\begin{minipage}{0.55\linewidth}
\centering
\includegraphics[width=0.4\textwidth]{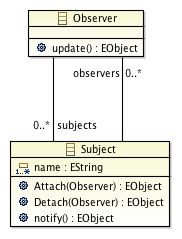}
 \vspace{-2ex}
\caption{The observer model}
\label{m2}
\end{minipage}
\vspace{-3ex}
 \end{figure}
This example is presented in~\cite{johannes2010component} and is accessible with the \reuseware Eclipse plugin applications\footnote{\url{http://www.reuseware.org/index.php/Reuseware_Aspect_Weaving}}. We slightly modified the observer model by adding an attribute \texttt{name} having as minimal multiplicity~$1$ and as maximal multiplicity~$*$ (see Figure~\ref{m2}) to illustrate the issue with the result of the composition (see Figure~\ref{cr}).
\begin{figure}[H]
\vspace{-3ex}
\begin{minipage}{0.55\linewidth}
 \centering
\includegraphics[width=0.85\textwidth]{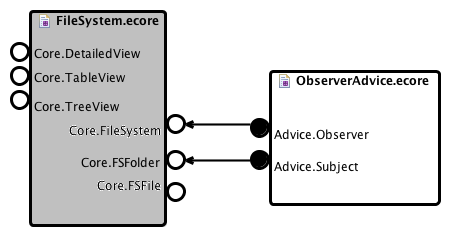}
 \vspace{-2ex}
 \caption{The composition program}
 \label{cp}
 \end{minipage}
\begin{minipage}{0.55\linewidth}
\centering
\includegraphics[width=0.8\textwidth]{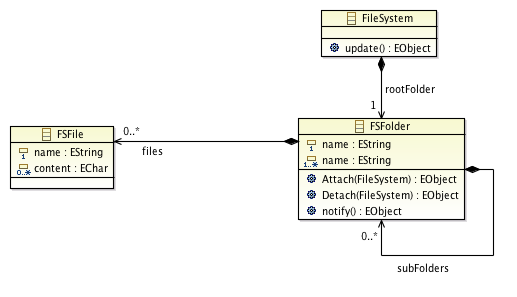}
  \vspace{-2ex}
 \caption{The composition result}
\label{cr}
\end{minipage}
\vspace{-3ex}
 \end{figure}
 
The composition program shown in Figure~\ref{cp} describes the links between the variation and reference points and aims to implement the class weaving for the two metamodels.

In the model obtained by composition, the class FSFolder has two attributes \texttt{name} with different multiplicities (1 and 1 .. *). This generates ambiguities and the metamodel is clearly inconsistent.

Our approach for the verification allows to detect and avoid this kind of inconsistencies by considering the metamodel semantics properties. The fact that one attribute must have a single value for the minimum and maximum multiplicities is a semantics property represented with the attributes $lower$ and $upper$ for the class $Property$ (see Figure~\ref{fig:mof})
\begin{figure}[H]
  \centering
 \vspace{-3ex}
\includegraphics[width=0.7\textwidth]{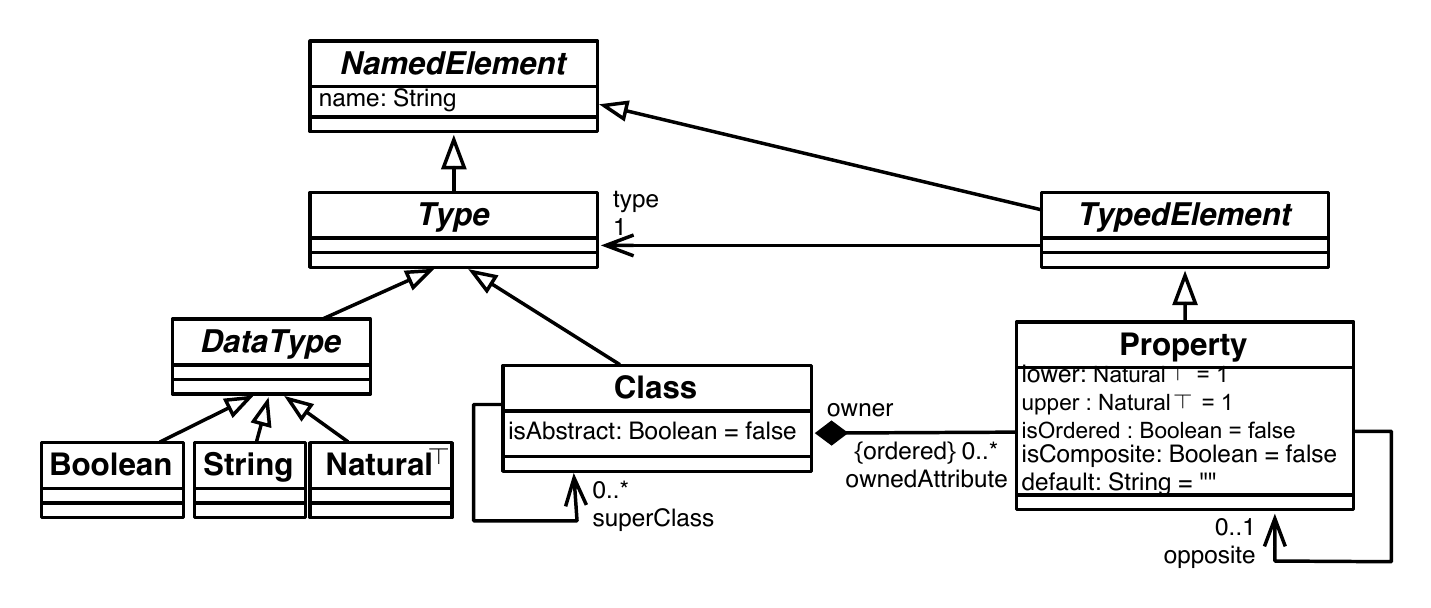}
  \vspace{-2ex}
 \caption{The \mof metamodel}
 \label{fig:mof}
 \vspace{-2ex}
\end{figure}

We consider our metamodels as models conforming to \mof (represented in Figure~\ref{fig:mof} as a metamodel), then we verify the conformance properties in relation with this metamodel. We show in the following the verifications of this kind of properties for the \isc method \bind operator.

\section{The verifications}\label{V}
The \bind operator formalized in~\cite{kezadri2011proof} enables for two models $M$ and $M'$ to replace a model's element $\mathtt{b}$ from the model $M$ referenced by a variation point by another model's element $\mathtt{b'}$ from the model $M'$ referenced by a reference point. The two model's elements $b$ and $b'$ must have the same type. This operator as presented in~\cite{kezadri2011proof} is proved compositional for typing but can generate inconsistencies on the resulted models with respect to conformance. The predicate \texttt{InstanceOf} is used to check that
all objects and links of a model are instance of classes and
references in a metamodel.
\[
\begin{array}{l}
InstanceOf(\langle \langle \mathtt{MV}, \mathtt{ME} \rangle , \langle \langle \mathtt{MMV}, \mathtt{MME} ,conformsTo \rangle \rangle  \rangle)\triangleq\\
\qquad \forall~\langle \mathtt{o},\mathtt{c} \rangle \in \mathtt{MV}, \mathtt{c} \in \mathtt{MMV}
 \wedge~\forall \langle \langle \mathtt{o},\mathtt{c} \rangle , \mathtt{r}, \langle \mathtt{o'},\mathtt{c'} \rangle \rangle \in \mathtt{ME},~ \langle \mathtt{c},\mathtt{r},\mathtt{c'} \rangle \in \mathtt{MME}\\
\end{array}
\]
Then, this predicate is used to verify that using two components
instance of \texttt{MM}, the component resulting from the application
of the \texttt{bind} operator is also instance of \texttt{MM}.

Let consider now the inheritance property represented using the relation $superClass$ in Figure~\ref{fig:mof}. This property is formally represented in \coqformde with a special reference called $inh$. The property \texttt{subClass} states that $c_2$ is a direct
  subclass of $c_1$ in the model $\langle MV, ME\rangle$.
 \[
\begin{array}{l}
 subClass(c_1, c_2\in \texttt{Classes}, \langle MV, ME\rangle) \triangleq 
\forall o~\in \texttt{Objects}, \langle o: c_2 \rangle \in MV \Rightarrow \langle \langle o: c_2\rangle, inh, \langle o: c_1\rangle\rangle \in ME
\end{array}
\]

In Figure~\ref{fig:inhNP}, we show that the \bind operator generates inconsistencies concerning the inheritance. In this example, we apply on the model the \bind operator with as parameters ($c:C$) and ($c':C'$), so that replaces the model's element ($c:C$) by ($c':C'$). The condition for this operator is that $C$ is equal to $C'$, this preserves the type safety but generates problems with the inheritance. The cause is that this replacement does not preserve the label reuse used to implement the inheritance and discussed in the Section~\ref{Coq4MDE}.

\begin{figure}[h]
 \vspace{-1ex}
\begin{center}
\scalebox{0.6}{
 	\begin{tikzpicture} 
 	
	\tikzstyle{hook}=[thick,dashed, fill=gray!20] 	
    \tikzstyle{prototype}=[thick, fill=black!20] 	
 	\draw [hook] (1.2,0) circle (0.2);
 	\draw [prototype] (2.8,0) circle (0.2);
 	\draw (-1,-1) -- (-1,5);
 	\draw (-1,5) -- (1,5);
 	\draw (1,5) -- (1,-1);
 	\draw (-1,-1) -- (1,-1);
 	
 	\draw (3,-1) -- (3,1);
 	\draw (3,1) -- (5,1);
 	\draw (5,1) -- (5,-1);
 	\draw (3,-1) -- (5,-1);
 	
 	\draw (13,-1) -- (13,5);
 	\draw (13,5) -- (15,5);
 	\draw (15,5) -- (15,-1);
 	\draw (13,-1) -- (15,-1);
 	
 	\draw (0.5,0) -- (1,0);
 	 	
 	\draw (1.4,0) -- (2.6,0);
 	
	\draw (3,0) -- (3.5,0); 	
 	
	\umlemptyclass{c: C}
	\umlemptyclass[x=4]{c': C'}
	\umlemptyclass[y=2]{c: Csup}
	\umlemptyclass[y=4]{c: Cssup}
	\umluniassoc[arg=inh, pos=0.3, align=right, name=inh]{c: C}{c: Csup} 
	\umluniassoc[arg=inh, pos=0.3, align=right, name=inh]{c: Csup}{c: Cssup}
	
	\umlemptyclass[x=14]{c': C'}
	\umlemptyclass[x=14, y=2]{c: Csup} 
	\umlemptyclass[x=14, y=4]{c: Cssup}
	\umluniassoc[arg=inh, pos=0.3, align=right, name=inh]{c': C'}{c: Csup}
	\umluniassoc[arg=inh, pos=0.3, align=right, name=inh]{c: Csup}{c: Cssup}
	
	\draw (-0.3,-1.5) node[right] {$M_1$};
	\draw (3.7,-1.5) node[right] {$M_2$};		
		
	\draw [thick, ->] (6,1) -- (12,1);
	\draw (7,1.5) node[right] {$bind~(c,C)~(c',C')~M_1~M_2$};
	\end{tikzpicture}
	}
\end{center}
\vspace{-2ex}
 	\caption{Inconsistency (concerning the inheritance) generated by the \bind operator}
 	\label{fig:inhNP}
 \vspace{-1ex}
\end{figure}
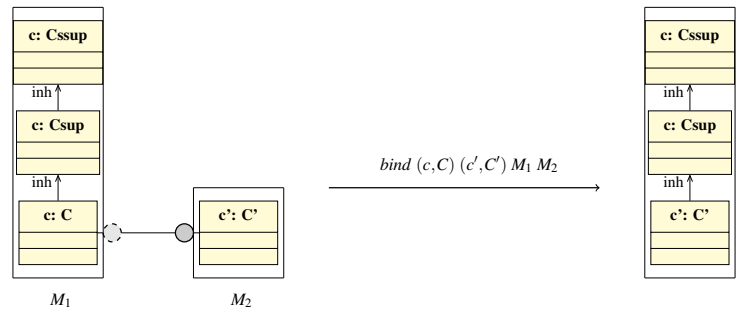

To correct this inconsistency, we slightly modified this operator. The new operator changes the name of all elements named $c$ by $c'$, the type of each element remains unchanged. We prove then the preservation of the type safety, the inheritance and other \mof properties by this operator. The \bind operator is also extended to a recursive form to support several variation and reference points as mentioned in Figure~\ref{FigSMExp}.

\begin{figure}[h]
  \centering
 \includegraphics[width=0.7\textwidth]{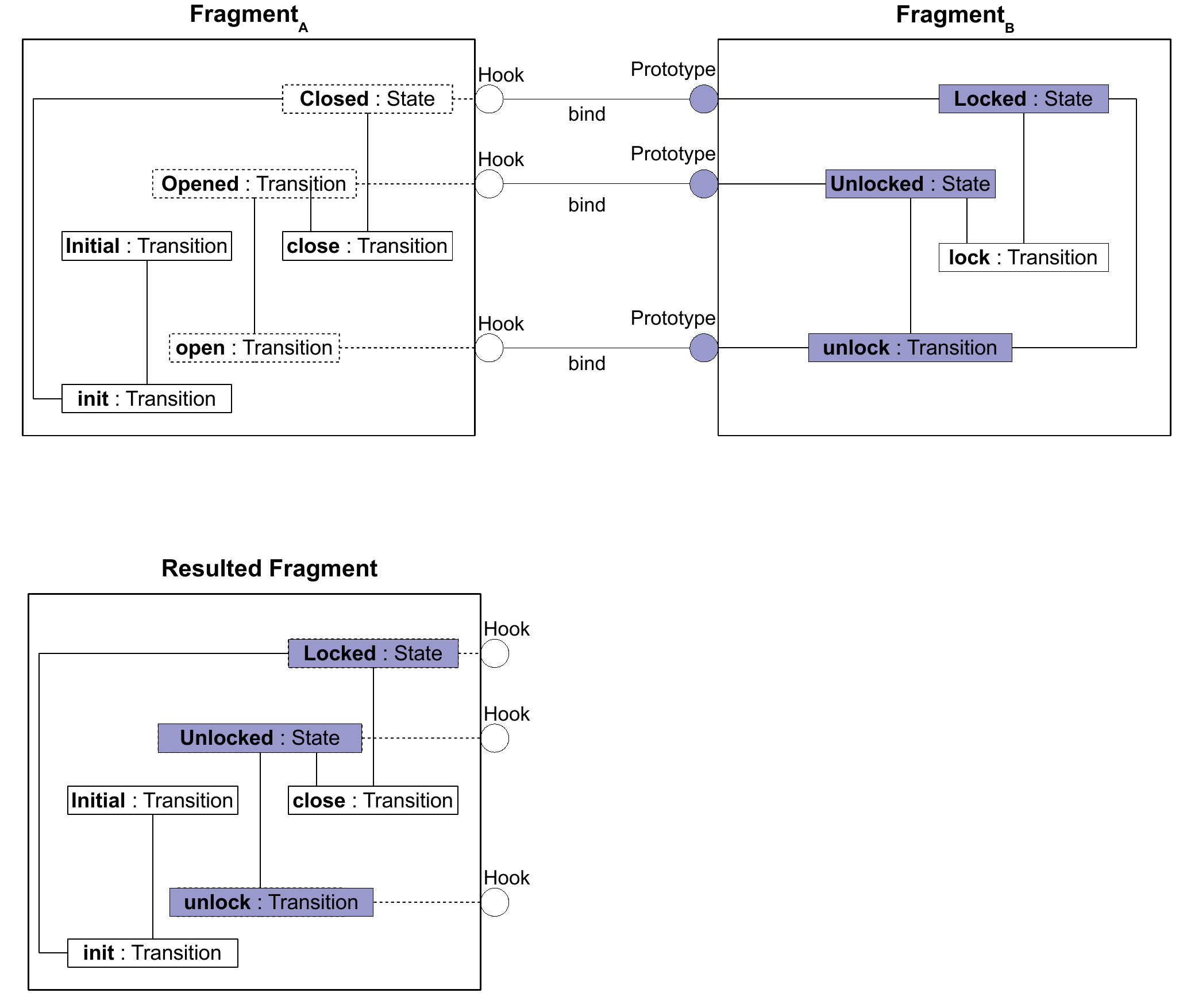}
  \vspace{-2ex}
 \caption{The \bind of several variation points operator}
  \vspace{-2ex}
  \label{FigSMExp}
\end{figure}

We reuse $InstanceOf$ predicate to prove the type safety for the new $bind$ operator using the theorem~\ref{ValidBind}\footnote{\url{http://coq4mde.enseeiht.fr/FormalMDE/Bind2M_Verif.html\#ValidBind}}
(\texttt{ValidBind}) for any two models $\mathtt{M_1}$ and $\mathtt{M_2}$ and any models' elements $\mathtt{o_1}$ and $\mathtt{o_2}$.
\begin{theorem}\emph{(ValidBind)}\label{ValidBind}\\
$~~~~~~~InstanceOf~  (\mathtt{M_1}, \mathtt{MM}) ~ \wedge ~InstanceOf ~ (\mathtt{M_2}, \mathtt{MM}) 
\rightarrow InstanceOf ~ ((bind  ~ \mathtt{o_1}~\mathtt{o_2}~\mathtt{M_1}  ~ \mathtt{M_2}), \mathtt{MM})$
\end{theorem}

\begin{proof}

\pfsketch{
We suppose that the two models $M_1$ and $M_2$ are instance of the metamodel $MM$  and we prove that the model obtained by applying the $bind$ operator on the two models using any two model's elements $o_1$ and $o_2$ is also instance of the metamodel $MM$. We verify first that $o_1$ is a $Hook$ for the model $M_1$ and $o_2$ is a $Prototype$ for the model $M_2$ and that $o_1$ and $o_2$ have the same type otherwise the $bind$ returns the model $M_1$ and the proof is trivial. In case of $o_1$ is a $Hook$ and $o_2$ is a $Prototype$ and the two model's elements have the same types, we show that the $bind$ does not change the types of the vertices and edges and so preserves the type safety (some details of the proof are given as an appendix).
}
\end{proof}

We developed an elegant way to prove that the basic composition operators preserve the conformance regarding the semantics properties of metamodel (other than typing). We used this method to take into account some semantics properties of the \mof metametamodel. This approach prevents us trying to extract the properties from the initial metamodel which is not obvious to do. The complexity is linked to the fact that the \texttt{conformsTo} predicate is defined in a generic way to support any kind of properties on the metamodel. The idea is to ensure that each elementary property verified on the initial models is also verified on the result of the application of the composition operator. So, if the initial models are conform to some metamodel, the resulting model is consistent with the same metamodel.
The basic semantics properties considered are: inheritance (\texttt{subClass}), abstract classes (\texttt{isAbstract}), multiplicities (\texttt{lower} and \texttt{upper}), the opposite references (\texttt{isOpposite}) and composite references (\texttt{areComposite}).

In follows, we present for each property, the theorem that proves the preservation for the \bind operator and the link to the complete \coq proof.

\subsection{The verification of some \mof properties}
We show that inheritance, abstract classes, multiplicities, opposite and composite references are preserved by the \bind operator. We note also that the proofs of these properties require in some cases additional preconditions that represent the residual verification activities when composing verified models.

\paragraph{The subClass property:} The theorem~\ref{BindSubClassPreserved} (\texttt{BindSubClassPreserved}) shows\footnote{\url{http://coq4mde.enseeiht.fr/FormalMDE/Bind2M_Verif.html\#Bind2MSCP}} that the inheritance is preserved by the \bind operator. So, for all classes $c_1$ and $c_2$ and for all model's elements $o_1$ and $o_2$, if $c_1$ is a \texttt{subClass} of $c_2$ in two models $M_1$ and $M_2$, then $c_1$ is a \texttt{subClass} of $c_2$ in the resulting model from (\bind $o_1$ $o_2$ $M_1$ $M_2$).
\begin{theorem}\emph{(BindSubClassPreserved)}\label{BindSubClassPreserved}\\
$\begin{array}{l}
\quad \forall~\mathtt{M_1}~\mathtt{M_2} \in \texttt{Model},~\mathtt{c_1}~\mathtt{c_2} \in \texttt{Classes},~ \mathtt{o_1}~\mathtt{o_2} \in \texttt{Objects},\\
\qquad (subClass~\mathtt{c_1}~\mathtt{c_2}~\mathtt{M_1}) \wedge~(subClass~\mathtt{c_1}~\mathtt{c_2}~\mathtt{M_2})
\rightarrow subClass ~\mathtt{c_1}~\mathtt{c_2}~(bind~\mathtt{o_1}~ \mathtt{o_2} ~\mathtt{M_1}~\mathtt{M_2})
\end{array}$
\end{theorem}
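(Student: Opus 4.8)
The plan is to unfold $subClass$ on the result of $bind$ and trace every required $inh$ edge back to one of the two argument models, using the fact that the revised $bind$ acts as a uniform renaming of object labels that is propagated into the edge set as well as the vertex set.

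First I would deal with the degenerate case. As recalled in the sketch for Theorem~\ref{ValidBind}, if $o_1$ is not a \hook of $\mathtt{M_1}$, or $o_2$ is not a prototype of $\mathtt{M_2}$, or $o_1$ and $o_2$ do not have the same type, then $bind~\mathtt{o_1}~\mathtt{o_2}~\mathtt{M_1}~\mathtt{M_2} = \mathtt{M_1}$, and the goal $subClass~\mathtt{c_1}~\mathtt{c_2}~(bind~\mathtt{o_1}~\mathtt{o_2}~\mathtt{M_1}~\mathtt{M_2})$ is definitionally the first hypothesis. So assume the preconditions hold and let $\rho$ be the substitution on \texttt{Objects} sending $o_1$ to $o_2$ and fixing every other label. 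Writing $\mathtt{M_1} = \langle MV_1, ME_1\rangle$ and $\mathtt{M_2} = \langle MV_2, ME_2\rangle$, the composite is $\langle MV, ME\rangle$ where $MV = \rho(MV_1) \cup MV_2$ and $ME = \rho(ME_1) \cup ME_2$, with $\rho$ rewriting only the object component of each vertex and of each endpoint of each edge and leaving classes and the reference $inh$ untouched --- this is exactly the point on which the operator of \cite{kezadri2011proof} was defective (cf. Figure~\ref{fig:inhNP}).

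Then, to establish $subClass~\mathtt{c_1}~\mathtt{c_2}~\langle MV, ME\rangle$, fix an arbitrary $o \in \texttt{Objects}$ with $\langle o, \mathtt{c_2}\rangle \in MV$ and produce $\langle \langle o, \mathtt{c_2}\rangle, inh, \langle o, \mathtt{c_1}\rangle\rangle \in ME$. Since $MV$ is a union, either $\langle o, \mathtt{c_2}\rangle \in MV_2$, in which case the hypothesis $subClass~\mathtt{c_1}~\mathtt{c_2}~\mathtt{M_2}$ gives $\langle \langle o, \mathtt{c_2}\rangle, inh, \langle o, \mathtt{c_1}\rangle\rangle \in ME_2 \subseteq ME$; or $\langle o, \mathtt{c_2}\rangle \in \rho(MV_1)$, in which case choose $o'$ with $\rho(o') = o$ and $\langle o', \mathtt{c_2}\rangle \in MV_1$, apply $subClass~\mathtt{c_1}~\mathtt{c_2}~\mathtt{M_1}$ to get $\langle \langle o', \mathtt{c_2}\rangle, inh, \langle o', \mathtt{c_1}\rangle\rangle \in ME_1$, and push this edge through $\rho$: both endpoints are renamed to $\rho(o') = o$ while $\mathtt{c_1}$, $\mathtt{c_2}$ and $inh$ stay fixed, so $\langle \langle o, \mathtt{c_2}\rangle, inh, \langle o, \mathtt{c_1}\rangle\rangle \in \rho(ME_1) \subseteq ME$.

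The only real work is in the set-up: proving the rewriting lemmas that show $bind$ genuinely commutes the renaming $\rho$ through the two components of a \texttt{Model}, and in particular that it renames the source and target of an $inh$ edge to the \emph{same} new label (which is what makes the label-reuse encoding of inheritance survive the operation). Once those are discharged in \coq, the case analysis above is routine, and --- consistently with the claim that the revised $bind$ repairs the inconsistency of Figure~\ref{fig:inhNP} --- no residual precondition is required for this property.
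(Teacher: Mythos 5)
Your proposal is correct and follows essentially the same route as the paper's proof: dispatch the degenerate case where $bind$ returns $\mathtt{M_1}$, then observe that the revised operator is a uniform renaming that preserves classes, the $inh$ reference, and the equality of labels on both endpoints of each $inh$ edge, so each required edge in the composite is traced back to one of the two hypotheses. Your packaging of the paper's ``all cases of equality with $o_1$ and $o_2$'' case analysis as an explicit substitution $\rho$ commuting with the model structure is just a cleaner statement of the same argument.
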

\begin{proof}
\pfsketch{
We prove that any two classes $c_1$ and $c_2$ linked by the $inh$ relation in any two models $M_1$ and $M_2$, are also linked by the $inh$ relation in the model obtained from the $bind$ of any two model's elements $o_1$ and $o_2$ using the two models $M_1$ and $M_2$. We suppose that we have a relation $inh$ between any two model's elements typed by $c_1$ and $c_2$ in the two models $M_1$ and $M_2$. We verify first that $o_1$ is a $Hook$ for the model $M_1$ and $o_2$ is a $Prototype$ for the model $M_2$ and that $o_1$ and $o_2$ have the same type otherwise the $bind$ returns the model $M_1$ and the proof is trivial. In case of $o_1$ is a $Hook$ and $o_2$ is a $Prototype$ and the two model's elements have the same type, we show that the $bind$ does not change the types of model's elements and the types of relations and so in the resulted model we have always an $inh$ relation between any model's elements typed by $c_1$ and $c_2$. The \coq proof is long but straightforward and considers all the cases of equality between the name of any model's element typed by $c_1$ and the names of the model's elements $o_1$ and $o_2$ and shows in all cases that the $inh$ relation is preserved.
}
\end{proof}

So, there is no necessary precondition on the parameters of the \bind operator to verify that the \texttt{subClass} property is compositional.

\paragraph{The isAbstract property:}
Abstract classes that are specified in a
metamodel using the \emph{isAbstract} attribute are
not suitable for instantiation. They are often used to represent
abstract concepts or entities. 

\[
\begin{array}{l}
isAsbstract(c_1 \in \texttt{Classes}, \langle MV, ME \rangle) \triangleq \\
\qquad \forall~o\in \texttt{Objects}, \langle o: c_1 \rangle \in MV \Rightarrow \exists~c_2\in \texttt{Classes}, \langle \langle o: c_2\rangle, inh, \langle o: c_1\rangle\rangle\in ME
\end{array}
\]

The preservation of this property by the \bind operator is proved\footnote{\url{http://coq4mde.enseeiht.fr/FormalMDE/Bind2M_Verif.html\#Bind2MIAP}} using the theorem~\ref{BindIsAbstractPreserved}. This theorem shows that all the abstract classes in any two models $\mathtt{M_1}$ and $\mathtt{M_2}$ are also abstract in the model obtained by the application of the \bind operator on the two models.

\begin{theorem}\emph{(BindIsAbstractPreserved)}\label{BindIsAbstractPreserved}\\
$\begin{array}{l}
\quad \forall~\mathtt{M_1}~\mathtt{M_2} \in \texttt{Model},~ \mathtt{c} \in \texttt{Classes},~ \mathtt{o_1}~\mathtt{o_2} \in \texttt{Objects},\\
\qquad (isAbstract~\mathtt{c}~\mathtt{M_1}) \wedge~(isAbstract~\mathtt{c}~\mathtt{M_2})
 \rightarrow isAbstract~ \mathtt{c}~(bind~\mathtt{o_1}~ \mathtt{o_2} ~\mathtt{M_1}~\mathtt{M_2})
\end{array}$
\end{theorem}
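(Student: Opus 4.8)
The plan is to follow the skeleton already used for Theorem~\ref{ValidBind} and Theorem~\ref{BindSubClassPreserved}. First I would introduce the universally quantified $\mathtt{M_1}$, $\mathtt{M_2}$, $\mathtt{c}$, $\mathtt{o_1}$, $\mathtt{o_2}$ and assume the two hypotheses $isAbstract~\mathtt{c}~\mathtt{M_1}$ and $isAbstract~\mathtt{c}~\mathtt{M_2}$. Then I would case split on the guard of \bind: if $\mathtt{o_1}$ is not a \hook of $\mathtt{M_1}$, or $\mathtt{o_2}$ is not a \texttt{Prototype} of $\mathtt{M_2}$, or the two elements do not carry the same type, then $bind~\mathtt{o_1}~\mathtt{o_2}~\mathtt{M_1}~\mathtt{M_2}$ reduces to $\mathtt{M_1}$ and the goal is exactly the first hypothesis. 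The only real work is therefore the case where the guard holds, so that \bind performs the renaming of every occurrence of the object label $\mathtt{o_1}$ into $\mathtt{o_2}$ while leaving the class component of every vertex and every edge unchanged.

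In that case I would unfold $isAbstract$ on the conclusion: given an object $o$ with $\langle o : \mathtt{c}\rangle$ in the vertex set of the composed model, I must produce a class $c_2$ together with an inheritance edge $\langle \langle o : c_2\rangle, inh, \langle o : \mathtt{c}\rangle\rangle$ in its edge set. The structural fact I would rely on --- the same one already used for the previous operators --- is that a vertex $\langle o : \mathtt{c}\rangle$ of the result is the image, under the renaming $\mathtt{o_1}\mapsto\mathtt{o_2}$, of a vertex with the same class $\mathtt{c}$ taken from $\mathtt{M_1}$ or from $\mathtt{M_2}$, and symmetrically that every $inh$ edge of the result is the renaming of an $inh$ edge of one of the two source models. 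I would then trace $\langle o : \mathtt{c}\rangle$ back to its origin. If it comes from $\mathtt{M_1}$ as some $\langle o' : \mathtt{c}\rangle$ (with $o'=o$, or with $o=\mathtt{o_2}$ and $o'=\mathtt{o_1}$), the hypothesis $isAbstract~\mathtt{c}~\mathtt{M_1}$ yields a class $c_2$ and an edge $\langle \langle o' : c_2\rangle, inh, \langle o' : \mathtt{c}\rangle\rangle$ in $\mathtt{M_1}$; since both endpoints of that edge share the label $o'$, the renaming acts on them consistently and produces exactly $\langle \langle o : c_2\rangle, inh, \langle o : \mathtt{c}\rangle\rangle$ in the result, which is the required witness. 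The symmetric sub-case uses $isAbstract~\mathtt{c}~\mathtt{M_2}$ in the same way.

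The main obstacle, exactly as in the \coq development for Theorem~\ref{BindSubClassPreserved}, is the exhaustive bookkeeping over the equalities among $o$, $\mathtt{o_1}$ and $\mathtt{o_2}$ (and between $\mathtt{o_1}$ and $\mathtt{o_2}$ themselves): each combination fixes which source model the witnessing vertex and its $inh$ edge are drawn from and how the renaming acts on the endpoints, and every branch has to be discharged. The two facts that make all branches go through are that \bind never deletes an edge and never alters the class component of a vertex or of an edge, so the witness class $c_2$ produced by one of the hypotheses is still legal in the composed metamodel and the renamed inheritance edge is still present. Because the label-reuse encoding of inheritance forces an object and its super-instance to carry the \emph{same} object label, the renaming can never split the two endpoints of an $inh$ edge; this is precisely why --- as for \texttt{subClass} --- no extra precondition on $\mathtt{o_1}$ and $\mathtt{o_2}$ is needed, and I expect the statement to hold unconditionally as written.
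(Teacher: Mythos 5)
Your proposal matches the paper's own argument: the same initial case split on the guard of \bind (with the degenerate case reducing to $\mathtt{M_1}$), the same key observation that \bind only renames object labels while preserving class components so the witnessing $inh$ edge survives with both endpoints renamed consistently, and the same exhaustive case analysis over the equalities among $o$, $\mathtt{o_1}$ and $\mathtt{o_2}$. Your conclusion that no precondition is needed also agrees with the paper.
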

\begin{proof}
\pfsketch{
We prove that any abstract class $c$ in any two models $M_1$ and $M_2$, is also abstract in the model obtained from the $bind$ of any two model's elements $o_1$ and $o_2$ using the two models $M_1$ and $M_2$. We suppose that the class $c$ is abstract in the models $M_1$ and $M_2$. We verify first that $o_1$ is a $Hook$ for the model $M_1$ and $o_2$ is a $Prototype$ for the model $M_2$ and that $o_1$ and $o_2$ have the same types otherwise the $bind$ returns the model $M_1$ and the proof is trivial. In case of $o_1$ is a $Hook$ and $o_2$ is a $Prototype$ and the two model's elements have the same type, we show that the $bind$ does not change the types of model's elements and so in the resulted model if an element typed by the class $c$ is in the resulting model, then another element having the same name typed by $c_1$ and linked to the first model's element with an $inh$ relation will be also in the resulting model. The \coq proof is long and considers all the cases of equality between the name of any model's element typed by $c$ and the names of the model's elements $o_1$ and $o_2$ and shows in all cases that the relation is preserved.
}
\end{proof}

So, there is no precondition on the parameters of the \bind operator to verify that the \texttt{isAbstract} property is compositional.

\paragraph{The lower \& upper properties:}
A minimum and maximum number
of instances of target attribute or reference can be defined using the $lower$ and
$upper$ attributes. Both
attributes are used to represent a range of possible numbers of
instances. Unbounded ranges can be modelled using the $\top$ value for
the $upper$ attribute.
\[
\begin{array}{l}                                                                                                                                                                                                
lower(\mathtt{c_1}\in \mathtt{MMV}, \mathtt{r_1} \in \mathtt{MME}, \mathtt{n} \in Natural^{\top}) \triangleq \langle \mathtt{MV}, \mathtt{ME} \rangle \mapsto\\
	   \qquad \forall~ o \in \mathtt{Objects}, \langle \mathtt{o}: \mathtt{c_1} \rangle \in \mathtt{MV} \Rightarrow \vert \{ \mathtt{m_2} \in \mathtt{MV} \mid \langle \langle \mathtt{o}: \mathtt{c_1} \rangle, \mathtt{r_1}, \mathtt{m_2} \rangle \in \mathtt{ME} \} \vert \geq \mathtt{n}
\end{array} 
\]\label{lower}
The theorem~\ref{BindLowerPreserved} (\texttt{BindLowerPreserved}) shows\footnote{\url{http://coq4mde.enseeiht.fr/FormalMDE/Bind2M_Verif.html\#Bind2MLP}} that the  \texttt{lower} property is preserved by the \bind operator. The verification requires the \bind operator to be injective and preserves the difference between the elements in the resulting model. This is ensured if the model's element $\mathtt{o_2}$ is not in the first model, this verifies that the \bind operator does not add an element that already exists in the model. Finally, the preservation of the \texttt{lower} property is proven. An analogous formalization for the \texttt{lower} property is defined for the \texttt{upper} property
replacing $\geq$ by $\leq$.
\begin{theorem}\emph{(BindLowerPreserved)}\label{BindLowerPreserved}\\
$\begin{array}{l}
\quad			\forall~\langle \mathtt{MV_1},\mathtt{ME_1} \rangle~~ \langle \mathtt{MV_2},\mathtt{ME_2} \rangle \in \texttt{Model},~ \mathtt{c}  \in \texttt{Classes},~ \mathtt{r}   \in\ \texttt{Refer\-ences},\\
\quad			\mathtt{n} \in Natural^{\top},~ (\mathtt{o_1}: \mathtt{c_1})~(\mathtt{o_2}: \mathtt{c_2})\in \texttt{Objects}, \\
\qquad			\mathtt{c_1} = \mathtt{c_2} \wedge~(\forall~c, (\mathtt{o_2}: \mathtt{c}) \notin \mathtt{MV_1})
			\wedge~\textit{Injectif} ~bind\\
\qquad			\wedge~(lower~\mathtt{c}~\mathtt{r}~\mathtt{n}~\langle \mathtt{MV_1},\mathtt{ME_1} \rangle) \wedge~ (lower~\mathtt{c}~\mathtt{r}~\mathtt{n}~\langle \mathtt{MV_2},\mathtt{ME_2} \rangle)\\
\qquad			\rightarrow (lower~ \mathtt{c}~\mathtt{r}~\mathtt{n}~(bind ~(\mathtt{o_1}: \mathtt{c_1})~(\mathtt{o_2}: \mathtt{c_2})~\langle \mathtt{MV_1},\mathtt{ME_1} \rangle~\langle \mathtt{MV_2},\mathtt{ME_2} \rangle)).
\end{array}$
\end{theorem}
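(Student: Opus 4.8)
The plan is to exploit two structural facts about $bind$ that are already used in the proofs of Theorems~\ref{ValidBind} and~\ref{BindSubClassPreserved}: first, $bind$ never alters the class component of a vertex nor the reference component of an edge (type preservation); second, on the material coming from $\mathtt{M_1}$ the operator acts as the label substitution $\rho \triangleq [\,\mathtt{o_1}\mapsto\mathtt{o_2}\,]$, while $\mathtt{M_2}$ is grafted essentially verbatim, so that for the result $\langle\mathtt{MV},\mathtt{ME}\rangle$ of $bind~(\mathtt{o_1}{:}\mathtt{c_1})~(\mathtt{o_2}{:}\mathtt{c_2})~\langle\mathtt{MV_1},\mathtt{ME_1}\rangle~\langle\mathtt{MV_2},\mathtt{ME_2}\rangle$ one has $\mathtt{MV}=\rho(\mathtt{MV_1})\cup\mathtt{MV_2}$ and $\mathtt{ME}=\rho(\mathtt{ME_1})\cup\mathtt{ME_2}$. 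The precondition $\forall c,\ (\mathtt{o_2}{:}c)\notin\mathtt{MV_1}$ guarantees that $\mathtt{o_2}$ does not already occur in $\mathtt{M_1}$, hence that $\rho$ is injective on the object labels of $\mathtt{M_1}$; together with \textit{Injectif}~$bind$ (the induced maps $\mathtt{MV_1}\to\mathtt{MV}$ and $\mathtt{ME_1}\to\mathtt{ME}$ are injective) this makes those maps cardinality preserving.

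The argument then simply unfolds the $lower$ predicate on the composite. Fix an object $\mathtt{o}$ with $\langle\mathtt{o}{:}\mathtt{c}\rangle\in\mathtt{MV}$; we must lower-bound the size of $S\triangleq\{\mathtt{m}\in\mathtt{MV}\mid\langle\langle\mathtt{o}{:}\mathtt{c}\rangle,\mathtt{r},\mathtt{m}\rangle\in\mathtt{ME}\}$. By the decomposition $\mathtt{MV}=\rho(\mathtt{MV_1})\cup\mathtt{MV_2}$, and using type preservation so that the class component stays $\mathtt{c}$, either $\langle\mathtt{o}{:}\mathtt{c}\rangle=\rho(\langle\mathtt{o'}{:}\mathtt{c}\rangle)$ for some $\langle\mathtt{o'}{:}\mathtt{c}\rangle\in\mathtt{MV_1}$, or $\langle\mathtt{o}{:}\mathtt{c}\rangle\in\mathtt{MV_2}$ (the two cases are not exclusive).

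In the first case, apply $lower~\mathtt{c}~\mathtt{r}~\mathtt{n}~\langle\mathtt{MV_1},\mathtt{ME_1}\rangle$ to $\mathtt{o'}$: the set $S_1\triangleq\{\mathtt{m'}\in\mathtt{MV_1}\mid\langle\langle\mathtt{o'}{:}\mathtt{c}\rangle,\mathtt{r},\mathtt{m'}\rangle\in\mathtt{ME_1}\}$ satisfies $|S_1|\geq\mathtt{n}$. Every $\mathtt{m'}\in S_1$ gives an edge $\rho(\langle\langle\mathtt{o'}{:}\mathtt{c}\rangle,\mathtt{r},\mathtt{m'}\rangle)=\langle\langle\mathtt{o}{:}\mathtt{c}\rangle,\mathtt{r},\rho(\mathtt{m'})\rangle\in\rho(\mathtt{ME_1})\subseteq\mathtt{ME}$ with $\rho(\mathtt{m'})\in\rho(\mathtt{MV_1})\subseteq\mathtt{MV}$, so $\rho(S_1)\subseteq S$; injectivity of $\rho$ on the vertices of $\mathtt{M_1}$ then yields $|S|\geq|\rho(S_1)|=|S_1|\geq\mathtt{n}$. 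The second case is identical, using $lower~\mathtt{c}~\mathtt{r}~\mathtt{n}~\langle\mathtt{MV_2},\mathtt{ME_2}\rangle$ and the fact that $\mathtt{M_2}$ is embedded unchanged; if both cases apply, either bound alone suffices. This establishes $lower~\mathtt{c}~\mathtt{r}~\mathtt{n}$ on the composite, which is the conclusion.

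The main obstacle is not the counting but the injectivity bookkeeping: the inequality $|S|\geq\mathtt{n}$ collapses the moment $bind$ identifies two formerly distinct out-neighbours of $\mathtt{o'}$, which is exactly why $\forall c,\ (\mathtt{o_2}{:}c)\notin\mathtt{MV_1}$ and \textit{Injectif}~$bind$ are demanded as preconditions (the residual verification). In \coq this surfaces as a systematic case split on whether each object label equals $\mathtt{o_1}$ or $\mathtt{o_2}$, threaded together with the type-preservation invariant so that the preimage $\mathtt{o'}$ carries precisely the class $\mathtt{c}$ needed to invoke the $lower$ hypothesis on $\mathtt{M_1}$. The cardinality arithmetic $|\rho(S_1)|=|S_1|$, and the degenerate $\top$ case of $Natural^{\top}$ (where $\geq$ is read so that transitivity still holds), are routine by comparison.
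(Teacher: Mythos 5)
Your proof is correct and follows essentially the same route as the paper's: $bind$ preserves the class and reference components and acts on $M_1$ as the relabelling $[\mathtt{o_1}\mapsto\mathtt{o_2}]$, which the preconditions $\forall c,\ (\mathtt{o_2}{:}c)\notin\mathtt{MV_1}$ and \textit{Injectif}~$bind$ make injective, so the out-degree sets cannot lose elements and the lower bound survives; your explicit image-and-cardinality bookkeeping is just a sharper rendering of the paper's sketch, including the same case split on object labels versus $\mathtt{o_1}$ and $\mathtt{o_2}$. The only step you leave implicit is the paper's initial dispatch of the degenerate case where $\mathtt{o_1}$ is not a \texttt{Hook} of $M_1$, $\mathtt{o_2}$ is not a \texttt{Prototype} of $M_2$, or their types differ, in which $bind$ returns $M_1$ unchanged and the conclusion is immediate from the first $lower$ hypothesis.
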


\begin{proof}
\pfsketch{
We suppose for any two models $\langle \mathtt{MV_1},\mathtt{ME_1} \rangle$ and $\langle \mathtt{MV_2},\mathtt{ME_2} \rangle$ that a lower bound $n$ is satisfied for the class $c$ in  relation with the reference $r$ (maximum $n$ model's elements are related by the relation $r$ to the same instance of the class $c$). Then, we prove that this lower bound $n$ is also satisfied in the model obtained from the $bind$ of any two model's elements $o_1$ and $o_2$ using the two models $\langle \mathtt{MV_1},\mathtt{ME_1} \rangle$ and $ \langle \mathtt{MV_2},\mathtt{ME_2} \rangle$. We verify first like in the previous proofs that $o_1$ is a $Hook$ for the model $\langle \mathtt{MV_1},\mathtt{ME_1} \rangle$ and $o_2$ is a $Prototype$ for the model $\langle \mathtt{MV_2},\mathtt{ME_2} \rangle$ and that $o_1$ and $o_2$ have the same types otherwise the $bind$ returns the model $M_1$ and the proof is trivial. In case of $o_1$ is a $Hook$ and $o_2$ is a $Prototype$ and the two model's elements have the same types, we show that the $bind$ does not change the types of the model's elements and does not reduce the lower bound in the resulting model because the $bind$ is supposed injective and so does not introduce new model's elements duplications. The \coq proof is long and uses intermediate lemmas to simplify iterations and calculations of the links and the model's elements (the difficulty is linked to the elegant coding of the  graphs and the models using dependent types). This proof considers also all the cases of equality between the name of any instance of $c$ and the names of the model's elements $o_1$ and $o_2$ and shows in all cases that the lower bound is preserved.}
\end{proof}

The preservation of the \texttt{upper} property is described\footnote{\url{http://coq4mde.enseeiht.fr/FormalMDE/Bind2M_Verif.html\#Bind2MUP}} by the \texttt{BindUpperPreserved} theorem which is similar to the previous theorem for the \texttt{lower} property.
So, we find it necessary to introduce assumptions about the model's elements to ensure that the composition using the \bind operator preserves the \texttt{lower} and \texttt{upper} properties. There are therefore preconditions on the \bind operator to ensure the preservation of these properties.

\paragraph{The isOpposite property:}
A reference can be associated to an $opposite$ reference.  It implies
that, in a valid model, for each link instance of this
reference between two objects, a link in the
opposite direction between the same objects exists.
\[
\begin{array}{l}                                                                                                                                                                                                
isOpposite(\mathtt{r_1}, \mathtt{r_2} \in \mathtt{MME}) \triangleq \langle \mathtt{MV}, \mathtt{ME} \rangle \mapsto
\forall ~ \mathtt{m_1}, \mathtt{m_2} \in \mathtt{MV}, \langle \mathtt{m_1}, \mathtt{r_1}, \mathtt{m_2} \rangle \in \mathtt{ME}                                                                                                                             
\Leftrightarrow \langle \mathtt{m_2}, \mathtt{r_2}, \mathtt{m_1} \rangle \in \mathtt{ME}                                                                                                                      
\end{array}                                                                                                                                                                                                     
\]
The theorem~\ref{BindIsOppositePreserved} (\texttt{BindIsOppositePreserved}) shows\footnote{\url{http://coq4mde.enseeiht.fr/FormalMDE/Bind2M_Verif.html\#Bind2MIOP}} that each pair of  opposite references in the two models $M_1$ and $M_2$ are also opposite in the resulting model from applying the \bind operator on the two models.
Finally, the property \texttt{isOpposite} is preserved.

\begin{theorem}\emph{(BindIsOppositePreserved)}\label{BindIsOppositePreserved}\\
$\begin{array}{l}
\quad \forall~\mathtt{M_1}~\mathtt{M_2} \in \texttt{Model},~ \mathtt{r_1}~\mathtt{r_2} \in \texttt{Refer\-ences},~ \mathtt{o_1}~\mathtt{o_2} \in \texttt{Objects}, \\
\qquad (isOpposite~\mathtt{r_1}~\mathtt{r_2}~\mathtt{M_1}) \wedge~(isOpposite~\mathtt{r_1}~\mathtt{r_2}~\mathtt{M_2})
\rightarrow (isOpposite~\mathtt{r_1}~\mathtt{r_2} ~(bind~\mathtt{o_1}~\mathtt{o_2}~\mathtt{M_1}~\mathtt{M_2})).
\end{array}$
\end{theorem}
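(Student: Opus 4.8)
The plan is to follow the same pattern as the proofs of Theorems~\ref{ValidBind}, \ref{BindSubClassPreserved} and \ref{BindIsAbstractPreserved}, exploiting the fact that, on its non-trivial branch, the \bind operator acts on a model as a \emph{type-preserving renaming}: it replaces every occurrence of the object name $o_1$ by $o_2$, leaving the class component of every vertex and the reference label of every edge unchanged, and grafts the relevant part of $M_2$. First I would case-split on the side conditions of \bind. If $o_1$ is not a $Hook$ of $M_1$, or $o_2$ is not a $Prototype$ of $M_2$, or $o_1$ and $o_2$ do not have the same type, then $(bind~o_1~o_2~M_1~M_2)$ reduces to $M_1$, so the goal is exactly the hypothesis $isOpposite~r_1~r_2~M_1$ and the proof is immediate. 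All the remaining work takes place in the branch where the renaming genuinely happens.

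On that branch I would reuse the structural characterisation of the result already established for \texttt{ValidBind} (together with the same intermediate lemmas on the \coqformde encoding of graphs with dependent types). Writing $\sigma$ for the renaming that sends $o_1$ to $o_2$ and fixes every other object, the edge set of $(bind~o_1~o_2~M_1~M_2)$ is the union of the $\sigma$-images of $ME_1$ and of the edges contributed by $M_2$, with reference labels preserved; and, crucially, every edge of the result is such a $\sigma$-image. The key observation is that $\sigma$ acts independently on the two endpoints of an edge and does not touch its label, hence commutes with edge reversal: the $\sigma$-image of $\langle m', r_2, m \rangle$ is $\langle \sigma(m'), r_2, \sigma(m) \rangle$.

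To prove $isOpposite~r_1~r_2$ on the result, I would take arbitrary vertices $m_1, m_2$ of the result and establish the biconditional. For the forward direction, if $\langle m_1, r_1, m_2 \rangle$ is an edge of the result, then by the characterisation above it is the $\sigma$-image of some edge $\langle a, r_1, b \rangle$ of $M_1$ or of $M_2$, with $\sigma(a) = m_1$ and $\sigma(b) = m_2$; since $isOpposite~r_1~r_2$ holds in both source models (the two hypotheses), $\langle b, r_2, a \rangle$ is an edge of that source model, and its $\sigma$-image $\langle \sigma(b), r_2, \sigma(a) \rangle = \langle m_2, r_2, m_1 \rangle$ is therefore an edge of the result. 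The backward direction is symmetric, exchanging the roles of $r_1$ and $r_2$ and using the source biconditionals in the other direction. Note that, contrary to the \texttt{lower} and \texttt{upper} properties (Theorem~\ref{BindLowerPreserved}), no injectivity precondition on \bind is required here: even if $\sigma$ identifies distinct objects, every edge of the result still arises as the $\sigma$-image of a source edge whose opposite is present, so no edge can occur in the result without its opposite.

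The main obstacle I expect is bookkeeping rather than conceptual. Since models are encoded with dependent types, the $\sigma$-image and the membership proofs require the usual technical manipulations (rewriting along equality proofs, reasoning with decidable equality of object names), so one has to carry out and discharge the case analysis on whether each of $m_1$, $m_2$, $a$ and $b$ equals $o_1$ or $o_2$ --- this is what makes the \coq proof long though straightforward. The one point genuinely worth isolating as a lemma is that the edges of the result are \emph{only} $\sigma$-images of source edges, i.e.\ that the merge creates no spurious edge, since the forward direction above relies on being able to pull $\langle m_1, r_1, m_2 \rangle$ back to a source edge.
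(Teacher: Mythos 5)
Your proposal is correct and follows essentially the same route as the paper's proof: the same case split on the $Hook$/$Prototype$/type-compatibility side conditions (with the trivial branch returning $M_1$), and in the non-trivial branch the observation that \bind acts as a type- and reference-preserving renaming of $o_1$ to $o_2$, so every edge of the result pulls back to a source edge whose opposite is present and pushes forward again. Your explicit remark that no injectivity precondition is needed, unlike for \texttt{lower} and \texttt{upper}, matches the paper's own conclusion for this property.
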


\begin{proof}
\pfsketch{
We prove that any two references $r_1$ and $r_2$ that are opposite in any two models $M_1$ and $M_2$, are also opposite in the model obtained from the $bind$ of any two model's elements $o_1$ and $o_2$ using the two models $M_1$ and $M_2$. We verify first like in all the other proofs that $o_1$ is a $Hook$ for the model $M_1$ and $o_2$ is a $Prototype$ for the model $M_2$ and that $o_1$ and $o_2$ have the same type otherwise the $bind$ returns the model $M_1$ and the proof is trivial. In case of $o_1$ is a $Hook$ and $o_2$ is a $Prototype$ and the two model's elements have the same type, we show that the $bind$ does not change the references and so we can find all the opposite references from the initial models. The \coq proof considers all the cases of equality between the names of the model's elements and the names of $o_1$ and $o_2$ and shows in all cases the preservation of the opposite references.
}
\end{proof}

So, there is no precondition on the parameters of the \bind operator to verify that the \texttt{isOpposite} property is compositional.

\paragraph{The areComposite property:}
A reference can be $composite$ and, as a matter of fact, defining a set
of references considered as a whole to be composite, instead of a
single one, appears closer to the intended meaning. In such a case,
instances of the target concept belong to a single instance of source
concepts.
\[   
\begin{array}{l}                                                                                                                                                                                  
areComposite(\mathtt{c_1} \in \mathtt{MMV}, R \subseteq \mathtt{MME}) \triangleq \langle \mathtt{MV}, \mathtt{ME} \rangle \mapsto\\
\qquad \forall~ o \in \mathtt{Objects} \Rightarrow 
\vert \{ \mathtt{m_1} \in \mathtt{MV} \mid \langle \mathtt{m_1}, \mathtt{r}, \langle \mathtt{o}: \mathtt{c_1} \rangle\rangle \in \mathtt{ME}, \mathtt{r} \in R \} \vert \leq 1                                                                         
\end{array}
\]
The theorem~\ref{BindareCompositeSubsPreserved} (\texttt{BindAreCompositeSubsPreserved}) shows\footnote{\url{http://coq4mde.enseeiht.fr/FormalMDE/Bind2M_Verif.html\#Bind2MACP}} that the set of composite references in the two models $\mathtt{M_1}$ and $\mathtt{M_2}$ are also composite in the resulted model from the application of the \bind operator on the two models. 
This theorem requires also that the \bind operator is injective and requires that the substituted model does not contain an element whose name is $\mathtt{o_2}$.

\begin{theorem}\emph{(BindAreCompositeSubsPreserved)}\label{BindareCompositeSubsPreserved}\\
$\begin{array}{l}
\quad			\forall~\langle \mathtt{MV_1},\mathtt{ME_1} \rangle~\langle \mathtt{MV_2},\mathtt{ME_2} \rangle \in \texttt{Model}, ~\mathtt{c} \in \texttt{Classes}, \\
\quad			\mathtt{R}  \subset \texttt{Refer\-ences},~\mathtt{o_1} ~\mathtt{o_2} \in \texttt{Objects},~\mathtt{c_2} ~\mathtt{c_2} \in \texttt{Classes},\\
\qquad			\mathtt{c_1} = \mathtt{c_2}~\wedge~(\forall~c, (\mathtt{o_2}: \mathtt{c}) \notin \mathtt{MV_1})
			\wedge~\textit{Injectif} ~bind\\
\qquad			\wedge~(areComposite~\mathtt{c}~\mathtt{R}~\langle \mathtt{MV_1},\mathtt{ME_1} \rangle) \wedge~(areComposite~\mathtt{c}~\mathtt{R}~\langle \mathtt{MV_2},\mathtt{ME_2} \rangle)\\
\qquad 			\rightarrow (areComposite~\mathtt{c}~\mathtt{R}~(bind~(\mathtt{o_1}: \mathtt{c_1})~(\mathtt{o_2}: \mathtt{c_2})~\langle \mathtt{MV_1},\mathtt{ME_1} \rangle~\langle \mathtt{MV_2},\mathtt{ME_2} \rangle))
\end{array}$
\end{theorem}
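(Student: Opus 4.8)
The plan is to follow the same route as for \texttt{BindLowerPreserved} and \texttt{BindUpperPreserved}, since the hypotheses are essentially the same (injectivity of \texttt{bind} together with the freshness of $\mathtt{o_2}$ in $\mathtt{MV_1}$) and the property is again a cardinality upper bound --- here on the set of \emph{incoming} $R$-links of a target vertex instead of the outgoing links of a source vertex. First I would unfold \texttt{bind} and branch on whether $\mathtt{o_1}$ is a $Hook$ of $\langle \mathtt{MV_1},\mathtt{ME_1}\rangle$, $\mathtt{o_2}$ is a $Prototype$ of $\langle \mathtt{MV_2},\mathtt{ME_2}\rangle$, and their types agree ($\mathtt{c_1}=\mathtt{c_2}$): if any of these fails, \texttt{bind} returns $\langle \mathtt{MV_1},\mathtt{ME_1}\rangle$ unchanged and the goal is closed at once by the hypothesis $(areComposite~\mathtt{c}~\mathtt{R}~\langle \mathtt{MV_1},\mathtt{ME_1}\rangle)$.

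In the remaining case I would fix an arbitrary object $o$ and bound the set $S = \{\mathtt{m_1}\in \mathtt{MV} \mid \langle \mathtt{m_1},\mathtt{r},\langle \mathtt{o}: \mathtt{c}\rangle\rangle \in \mathtt{ME},~\mathtt{r}\in \mathtt{R}\}$ of the composite model by $1$. The ingredients I would factor out as intermediate lemmas are: (i) \texttt{bind} changes neither the type of a vertex nor the reference label of an edge, so a link belongs to $R$ before renaming iff it does after; (ii) $\mathtt{MV}$ and $\mathtt{ME}$ are obtained, up to the renaming of the name $\mathtt{o_1}$ into $\mathtt{o_2}$, by combining the two source models; and (iii) since $\mathtt{o_2}$ does not occur in $\mathtt{MV_1}$ and \texttt{bind} is injective, the renaming $\mathtt{o_1}\mapsto\mathtt{o_2}$ is injective on vertices, hence it never merges two distinct elements and, in particular, does not pile up at the substituted vertex more incoming $R$-links than the two hypotheses $(areComposite~\mathtt{c}~\mathtt{R}~\langle \mathtt{MV_1},\mathtt{ME_1}\rangle)$ and $(areComposite~\mathtt{c}~\mathtt{R}~\langle \mathtt{MV_2},\mathtt{ME_2}\rangle)$ already permit. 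I would then case-split on whether $o=\mathtt{o_2}$: if not, $\langle \mathtt{o}: \mathtt{c}\rangle$ originates from $\mathtt{MV_1}$ and $|S|\le 1$ follows from the first hypothesis; if so, it originates from $\mathtt{MV_2}$ (identified, if present, with the renamed hook of $\mathtt{MV_1}$) and $|S|\le 1$ follows from the second together with the injectivity argument.

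The routine but long part, exactly as flagged for the \texttt{lower}/\texttt{upper} theorems, is the bookkeeping forced by the dependently-typed encoding of graphs: rewriting the set-builder for $S$ through the definition of \texttt{bind}, discharging the finiteness and cardinality obligations, and performing the case analysis on the equalities between $o$, $\mathtt{o_1}$, $\mathtt{o_2}$ and, for each candidate $\mathtt{m_1}$, between its name and $\mathtt{o_1}$, $\mathtt{o_2}$. I expect the main obstacle to be making precise \emph{why} the two preconditions are needed: they are exactly what forbids the renaming from collapsing an incoming $R$-link of the second model onto an already-present incoming $R$-link of the first (or two distinct links of the first onto one), which is the only way the at-most-one bound could be lost. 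The type- and reference-label-preservation ingredient is shared with all the earlier \texttt{Bind...Preserved} theorems and is the easy part; the duplication-avoidance argument at the substituted vertex is the crux.
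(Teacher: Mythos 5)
Your proposal matches the paper's own argument: the same initial branching on the Hook/Prototype/type-compatibility conditions (with the trivial fallback to $\mathtt{M_1}$), the same use of injectivity of \texttt{bind} plus the freshness of $\mathtt{o_2}$ in $\mathtt{MV_1}$ to rule out duplications at the substituted vertex, and the same exhaustive case analysis on name equalities with $\mathtt{o_1}$ and $\mathtt{o_2}$. The extra detail you give about the case split on $o=\mathtt{o_2}$ and the origin of the target vertex is a faithful refinement of what the paper's sketch leaves implicit, so no correction is needed.
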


\begin{proof}
\pfsketch{
We suppose for any two models $\langle \mathtt{MV_1},\mathtt{ME_1} \rangle$ and $\langle \mathtt{MV_2},\mathtt{ME_2} \rangle$, for any instance of a class $c$ in these models, at most one ancestor is linked with a composite reference. We verify that this property is also satisfied in the model obtained from the $bind$ of any two model's elements $o_1$ and $o_2$ using these two models. We verify first like in the previous proofs that $o_1$ is a $Hook$ for the model $\langle \mathtt{MV_1},\mathtt{ME_1} \rangle$ and $o_2$ is a $Prototype$ for the model $\langle \mathtt{MV_2},\mathtt{ME_2} \rangle$ and that $o_1$ and $o_2$ have the same type otherwise the $bind$ returns the model $M_1$ and the proof is trivial. In case of $o_1$ is a $Hook$ and $o_2$ is a $Prototype$ and the two model's elements have the same type, we show that the $bind$ does not change the types of the model's elements and does not increase the number of composite references for any model's element and this by supposing like for the $lower$ property proof that the $bind$ is injective and so does not introduce new model's elements duplications. The \coq proof is long and uses intermediate lemmas to simplify iterations and calculations of the references and models' elements (the difficulty is linked to elegant coding of graphs and models using dependent types). The proof considers also all the cases of equality between the name any instance of $c$ in the two models and the names of the model's elements $o_1$ and $o_2$ and shows in all cases that the limit for the number of composite relation for any model's element is preserved.}
\end{proof}

\subsection{The \bind operator with several variation points}
This version is a generalization of the \bind operator. It is characterized by a list $l$ of variation and reference points (\bind of two Models with Several Hooks). 

$\mathtt{Bind2MSH}: Model~\times~Model~\times~list ~(Objects~\times~Objects)$ is defined as:
\[
\begin{array}{l}
\mathtt{Bind2MSH}~\mathtt{M_1} ~\mathtt{M_2}~l = 
\forall~(o,o') \in l, bind ~\mathtt{M_1} ~\mathtt{M_2}~o~o'~l
\end{array}
\]
The proofs of properties require the following assumptions: type compatibility between two model's elements for each pair of elements in the list, the \bind operator to be injective and an additional condition: the same \texttt{Prototype} is not given more than once to ensure the preservation of multiplicities. The same assumptions/preconditions are necessary to prove the compositional verification of the various considered properties.

The proofs for this version of the \bind operator use the proofs of the \bind basic operator in addition to a standard schema to find the target model and the application conditions. The language of tactics for the \coq system~\cite{delahaye2000tactic} is used to define the tactics that significantly improved the proofs\footnote{\url{http://coq4mde.enseeiht.fr/FormalMDE/Bind2M_Verif.html}}.
     
\subsection{The \texttt{extend} operator}\label{extOp}
Two variations of the \texttt{extend} operator are implemented. The first version makes the hypothesis in addition to the \texttt{extend} operator definition that the two models are disjoint to define the predicate $extensibleC$.

The second version does not make any assumptions about the intersection of models. In this latest version, models can contain common elements as they may result from the extraction of components from the same model. We do not present the proofs of the \mof properties for these operators in this paper, but the proofs are finalized and the interested reader can refer to the special page\footnote{\url{http://coq4mde.enseeiht.fr/FormalMDE/Extend_Verif.html}}.
		
This work presents the preconditions allowing for each operator (\isc basic operators) to generate consistent metamodels. Detecting and resolving the conflicts require the compositional application of several composition operators (each operator is proved preserving the properties) and contributes for the satisfaction of the next applied operator preconditions. For example, this can be used to find a sensible unification of the constraints contributed by the two model's fragments being composed.

\section{Related work}
In the first version of the \isc composition
method~\cite{assmann2003invasive}, the notion of conformance is
restricted to the $instanceOf$ property defined
in~\cite{kezadri2011proof}. A composition operator is safe if it
preserves the consistence (Theorem 5.1 (Sound Composition retains
Consistency) in \cite{assmann2003invasive}). The first version of \isc
was defined on fragments of Java code, the extension operator
guarantees by definition that it will not change the code of a fragment box,
although it can change its semantics. The semantics is preserved if the
added code to the variation point is independent of the code of the fragment box (Theorem 5.2
(Sound Composition with Extension Composers)
in~\cite{assmann2003invasive}). We proved that the semantics is
preserved if the models are disjoint (Section \ref{extOp}). Moreover,
we extended this work by offering the preconditions that enable to
preserve the semantics and all the formal proofs in the \coq proof
assistant.

In the last version of \isc~\cite{johannes2010component} implemented
in the \reuseware framework as an Eclipse plugin and developed in
parallel with our work, the typing property is ensured in relation
with some properties of the \mof metamodel using the notion of
compatibility between the variation and reference points. But and as
presented in the motivating example of this paper, this version does
not take into account all the semantics properties of the \mof
metamodel and inconsistent metamodels can be generated by composition.
We presented then the theorems proving the preservations of some of
the \mof semantics properties and the preconditions for the
verification.

Our approach is original compared to the work of A{\ss}man~\cite{assmann2003invasive}, we provide
in advance the preconditions which ensure that the result of applying
an operator is valid (typing and semantics properties). We do not need
to check for each application that the result is valid, but we know
the preconditions that must be met and if our conditions are
satisfied, we can ensure that the result of the composition is
consistent. The expected direct consequences for our work are: the use of \coqformde to prove the correction of the \isc method itself and the composition methods in general by introducing and proving the preconditions ensuring the properties preservation.

This work is also closely related to all work about the formalization of model driven engineering, we present first in what follows some approaches based on shallow encoding and then compare them to our formalization. We present finally briefly a deep encoding for the \mde concepts associated with a highlight for the differences with our encoding.

MoMENT (MOdel manageMENT) \cite{DBLP:journals/fac/BoronatM10} is an algebraic model management framework that provides a set of generic
operators to manipulate models. In the MoMENT framework, the metamodels are represented as algebraic
specifications and the operators are defined independently of the metamodel using the \maude language \cite{clavel2002maude}. To be used, the operators must be specified in a module called signature that specifies the constructs of the metamodel. The approach was implemented in a tool\footnote{
  \url{http://moment.dsic.upv.es/}} that gives also an automatic
translation from an \texttt{EMF} metamodel to a signature model.

A. Vallecillo et al. have designed and implemented a
different embedding of metamodels, models
(\cite{DBLP:journals/jot/RomeroRDV07}) and model
transformations (\cite{DBLP:conf/icmt/TroyaV10}) using
\maude. This embedding relies on the object rewriting semantics in
order to implement model transformations.

I. Poernomo has proposed an encoding of metamodels and
models using type theory (\cite{DBLP:conf/sac/Poernomo06}) in
order to allow correct by construction development of model
transformation using proof-assistants like \coq
(\cite{DBLP:conf/icmt/Poernomo08}). Some simple experiments have been
conducted using \coq mainly on tree-shaped models
(\cite{DBLP:conf/icfem/PoernomoT10}) using inductive types. General
\texttt{graph} model structure can be encoded using co-inductive
types. However, as shown in \cite{2011.314} by C. Picard and
R. Matthes, the encoding is quite complex as \coq enforces structural
constraints when combining inductive and co-inductive types that
forbid the use of the most natural encodings proposed by Poernomo et
al. M. Giorgino et al. rely in \cite{giorgino2011verification} on a spanning tree of
the graph combined with additional links to overcome that
constraint using the \isabelle proof-assistant. This allows to develop
a model transformation relying on slightly adapted inductive proofs
and then extract classical imperative implementations. 

The HOL-OCL system \cite{brucker2002proposal} \cite{brucker2008hol} is an environment for interactive modelling with \uml and \ocl that can be used for example to prove class invariants. 

These
embeddings are all shallow: they rely on sophisticated similar data
structure to represent model elements and metamodels
(e.g. \coq (co-)inductive data types for model elements and object
and (co-)inductive types for metamodel elements).

The work described in this paper is a deep embedding, each concept
from models and metamodels was encoded in \cite{towers07} using
elementary constructs instead of relying on similar elements in
\maude, \coq or \isabelle. The purpose of this contribution is
not to implement model transformation using correct-by-construction
tools but to give a kind of denotational semantics for model-driven
engineering concepts that should provide a deeper understanding and
allow the formal validation of the various implemented technologies.
Other work aiming to define a semantics for a modelling language by explicitly and denotationally define the kind of systems the language describes and to focus on the variations and variability in the semantics \cite{cengarle2008system} \cite{maoz2011semantically}. Compared to the last work, we are interested in a complete and unique formalisation of the conformity to metamodels, of course this property must be considered in the more general consistency relation and we are focused mainly in the proof of the preservation of this conformity relation by the \isc composition operators. 

Another formalisation in \coq of the \mde concepts by F.Barbier et al is accessible\footnote{\url{http://web.univ-pau.fr/~barbier/Coq/}}~\cite{barbierhal00840748}, this representation is attached to the proof of the properties shown in~\cite{kuhne2006matters} (on instantiation relations and model transformations). The last formalization differs from ours by a detailed representation of the different components of models and metamodels based on the \mof concepts. The \coqformde formalisation has the advantage to be more generic and minimum through the use of modules for the representation of these concepts and its support for a large variety of properties describing the conformity by a predicate integrated to the metamodel type.


\section{Conclusion}

We have addressed the problem of compositional verification for models
relying on the generic composition method \isc and the \reuseware
framework. We first proposed in~\cite{kezadri2011proof} a formalization for the \isc composition method and the
verification of type safety for these operators. Then, we presented in
this paper the verification of generic semantics properties in
relation with the \mof metametamodel.

This integration enables to extract executable correct by construction
composition operators. The termination of the extracted operators is
ensured by the \coq definition. The typing property and a set of
semantics properties in relation with the \mof metametamodel are
proved preserved directly or by the composition operators by
introducing some preconditions on the parameters of the composition
operators. The application is not limited to a specific language, but
can be extended to all models and modeling languages defined by
metamodels. From the \isc composition method basic operators (\bind and
\texttt{extend}), more complex operators were built. The complex operators
allow more complex transformations such as linking several variation
points at the same time.

This proposal is a required step in the formalization of compositional
verification techniques. The next step of our work is to formalize
other composition operators and to take into account others static
constraints such as \texttt{OCL} constraints~\cite{omg2012ocl} and
more dynamic properties such as the deadlock freedom proposed in the
BIP framework~\cite{basu2006modeling}. The expected result of our work
is to define a correct by construction framework for combining several
composition techniques.

\bibliographystyle{eptcs}
\bibliography{FESCA2014}  

\begin{thebibliography}{10}
\providecommand{\bibitemdeclare}[2]{}
\providecommand{\surnamestart}{}
\providecommand{\surnameend}{}
\providecommand{\urlprefix}{Available at }
\providecommand{\url}[1]{\texttt{#1}}
\providecommand{\href}[2]{\texttt{#2}}
\providecommand{\urlalt}[2]{\href{#1}{#2}}
\providecommand{\doi}[1]{doi:\urlalt{http://dx.doi.org/#1}{#1}}
\providecommand{\bibinfo}[2]{#2}

\bibitemdeclare{book}{assmann2003invasive}
\bibitem{assmann2003invasive}
\bibinfo{author}{U.~\surnamestart A{\ss}mann\surnameend}
  (\bibinfo{year}{2003}): \emph{\bibinfo{title}{Invasive software
  composition}}.
\newblock \bibinfo{publisher}{Springer}, \doi{10.1007/978-3-662-05082-8}.

\bibitemdeclare{article}{barbierhal00840748}
\bibitem{barbierhal00840748}
\bibinfo{author}{Franck \surnamestart Barbier\surnameend},
  \bibinfo{author}{Pierre \surnamestart Castéran\surnameend},
  \bibinfo{author}{Eric \surnamestart Cariou\surnameend} \&
  \bibinfo{author}{Olivier \surnamestart le~Goaer\surnameend}
  (\bibinfo{year}{2013}): \emph{\bibinfo{title}{{Adaptive Software based on
  Correct-by-Construction Metamodels}}}.
\newblock {\sl \bibinfo{journal}{Progressions and Innovations in Model-Driven
  Software Engineering}}, p. \bibinfo{pages}{308–325},
  \doi{10.4018/978-1-4666-4217-1.ch013}.

\bibitemdeclare{article}{basu2006modeling}
\bibitem{basu2006modeling}
\bibinfo{author}{A.~\surnamestart Basu\surnameend},
  \bibinfo{author}{M.~\surnamestart Bozga\surnameend} \&
  \bibinfo{author}{J.~\surnamestart Sifakis\surnameend} (\bibinfo{year}{2006}):
  \emph{\bibinfo{title}{Modeling Heterogeneous Real-time Components in BIP}}.
\newblock {\sl \bibinfo{journal}{Fourth IEEE International Conference on
  Software Engineering and Formal Methods (SEFM’06)}},
  \doi{10.1109/sefm.2006.27}.

\bibitemdeclare{article}{DBLP:journals/fac/BoronatM10}
\bibitem{DBLP:journals/fac/BoronatM10}
\bibinfo{author}{Artur \surnamestart Boronat\surnameend} \&
  \bibinfo{author}{José \surnamestart Meseguer\surnameend}
  (\bibinfo{year}{2010}): \emph{\bibinfo{title}{{An algebraic semantics for
  MOF}, volume={22}, ISSN={1433-299X}}}.
\newblock {\sl \bibinfo{journal}{Form Asp Comp}} (\bibinfo{number}{3}), p.
  \bibinfo{pages}{269–296}, \doi{10.1007/s00165-009-0140-9}.

\bibitemdeclare{article}{brucker2002proposal}
\bibitem{brucker2002proposal}
\bibinfo{author}{Achim~D. \surnamestart Brucker\surnameend} \&
  \bibinfo{author}{Burkhart \surnamestart Wolff\surnameend}
  (\bibinfo{year}{2002}): \emph{\bibinfo{title}{{A Proposal for a Formal OCL
  Semantics in Isabelle/HOL}}}.
\newblock {\sl \bibinfo{journal}{Lecture Notes in Computer Science}},
  \doi{10.1007/3-540-45685-6\_8}.

\bibitemdeclare{incollection}{brucker2008hol}
\bibitem{brucker2008hol}
\bibinfo{author}{Achim~D \surnamestart Brucker\surnameend} \&
  \bibinfo{author}{Burkhart \surnamestart Wolff\surnameend}
  (\bibinfo{year}{2008}): \emph{\bibinfo{title}{{HOL-OCL: a formal proof
  environment for UML/OCL}}}.
\newblock In: {\sl \bibinfo{booktitle}{Fundamental Approaches to Software
  Engineering}}, \bibinfo{publisher}{Springer}, pp. \bibinfo{pages}{97--100},
  \doi{10.1007/978-3-540-78743-3\_8}.

\bibitemdeclare{article}{cengarle2008system}
\bibitem{cengarle2008system}
\bibinfo{author}{María~Victoria \surnamestart Cengarle\surnameend},
  \bibinfo{author}{Hans \surnamestart Grönniger\surnameend} \&
  \bibinfo{author}{Bernhard \surnamestart Rumpe\surnameend}
  (\bibinfo{year}{2009}): \emph{\bibinfo{title}{{Variability within Modeling
  Language Definitions}}}.
\newblock {\sl \bibinfo{journal}{Lecture Notes in Computer Science}}, p.
  \bibinfo{pages}{670–684}, \doi{10.1007/978-3-642-04425-0\_54}.

\bibitemdeclare{article}{clavel2002maude}
\bibitem{clavel2002maude}
\bibinfo{author}{Manuel \surnamestart Clavel\surnameend},
  \bibinfo{author}{Francisco \surnamestart Dur{\'a}n\surnameend},
  \bibinfo{author}{Steven \surnamestart Eker\surnameend},
  \bibinfo{author}{Patrick \surnamestart Lincoln\surnameend},
  \bibinfo{author}{Narciso \surnamestart Mart{\i}-Oliet\surnameend},
  \bibinfo{author}{Jos{\'e} \surnamestart Meseguer\surnameend} \&
  \bibinfo{author}{Jos{\'e}~F \surnamestart Quesada\surnameend}
  (\bibinfo{year}{2002}): \emph{\bibinfo{title}{Maude: specification and
  programming in rewriting logic}}.
\newblock {\sl \bibinfo{journal}{Theoretical Computer Science}}
  \bibinfo{volume}{285}(\bibinfo{number}{2}), p. \bibinfo{pages}{187–243},
  \doi{10.1016/s0304-3975(01)00359-0}.

\bibitemdeclare{article}{delahaye2000tactic}
\bibitem{delahaye2000tactic}
\bibinfo{author}{David \surnamestart Delahaye\surnameend}
  (\bibinfo{year}{2000}): \emph{\bibinfo{title}{A Tactic Language for the
  System Coq}}.
\newblock {\sl \bibinfo{journal}{Lecture Notes in Artificial Intelligence}}, p.
  \bibinfo{pages}{85–95}, \doi{10.1007/3-540-44404-1\_7}.

\bibitemdeclare{incollection}{giorgino2011verification}
\bibitem{giorgino2011verification}
\bibinfo{author}{Mathieu \surnamestart Giorgino\surnameend},
  \bibinfo{author}{Martin \surnamestart Strecker\surnameend},
  \bibinfo{author}{Ralph \surnamestart Matthes\surnameend} \&
  \bibinfo{author}{Marc \surnamestart Pantel\surnameend}
  (\bibinfo{year}{2011}): \emph{\bibinfo{title}{{Verification of the
  Schorr-Waite algorithm--From trees to graphs}}}.
\newblock In: {\sl \bibinfo{booktitle}{Logic-Based Program Synthesis and
  Transformation}}, \bibinfo{publisher}{Springer}, pp. \bibinfo{pages}{67--83},
  \doi{10.1007/978-3-642-20551-4\_5}.

\bibitemdeclare{incollection}{Heidenreich2009}
\bibitem{Heidenreich2009}
\bibinfo{author}{Florian \surnamestart Heidenreich\surnameend},
  \bibinfo{author}{Jakob \surnamestart Henriksson\surnameend},
  \bibinfo{author}{Jendrik \surnamestart Johannes\surnameend} \&
  \bibinfo{author}{Steffen \surnamestart Zschaler\surnameend}
  (\bibinfo{year}{2009}): \emph{\bibinfo{title}{{On Language-Independent Model
  Modularisation}}}.
\newblock In \bibinfo{editor}{Shmuel \surnamestart Katz\surnameend},
  \bibinfo{editor}{Harold \surnamestart Ossher\surnameend},
  \bibinfo{editor}{Robert \surnamestart France\surnameend} \&
  \bibinfo{editor}{Jean-Marc \surnamestart Jézéquel\surnameend}, editors:
  {\sl \bibinfo{booktitle}{Transactions on Aspect-Oriented Software Development
  VI}}, {\sl \bibinfo{series}{Lecture Notes in Computer Science}}
  \bibinfo{volume}{5560}, \bibinfo{publisher}{Springer Berlin Heidelberg}, pp.
  \bibinfo{pages}{39--82}, \doi{10.1007/978-3-642-03764-1\_2}.

\bibitemdeclare{article}{henriksson2008extending}
\bibitem{henriksson2008extending}
\bibinfo{author}{J.~\surnamestart Henriksson\surnameend},
  \bibinfo{author}{F.~\surnamestart Heidenreich\surnameend},
  \bibinfo{author}{J.~\surnamestart Johannes\surnameend},
  \bibinfo{author}{S.~\surnamestart Zschaler\surnameend} \&
  \bibinfo{author}{U.~\surnamestart Aßmann\surnameend} (\bibinfo{year}{2008}):
  \emph{\bibinfo{title}{{Extending grammars and metamodels for reuse: the
  Reuseware approach}}}.
\newblock {\sl \bibinfo{journal}{IET Software}}
  \bibinfo{volume}{2}(\bibinfo{number}{3}), p. \bibinfo{pages}{165},
  \doi{10.1049/iet-sen:20070060}.

\bibitemdeclare{phdthesis}{johannes2010component}
\bibitem{johannes2010component}
\bibinfo{author}{Jendrik \surnamestart Johannes\surnameend}
  (\bibinfo{year}{2010}): \emph{\bibinfo{title}{Component-based model-driven
  software development.}}
\newblock Ph.D. thesis, \bibinfo{school}{Dresden University of Technology}.

\bibitemdeclare{article}{kezadri2011proof}
\bibitem{kezadri2011proof}
\bibinfo{author}{Mounira \surnamestart Kezadri\surnameend},
  \bibinfo{author}{Benoît \surnamestart Combemale\surnameend},
  \bibinfo{author}{Marc \surnamestart Pantel\surnameend} \&
  \bibinfo{author}{Xavier \surnamestart Thirioux\surnameend}
  (\bibinfo{year}{2012}): \emph{\bibinfo{title}{{A proof assistant based
  formalization of MDE components}}}.
\newblock {\sl \bibinfo{journal}{Formal Aspects of Component Software}}, p.
  \bibinfo{pages}{223–240}, \doi{10.1007/978-3-642-35743-5\_14}.

\bibitemdeclare{article}{kuhne2006matters}
\bibitem{kuhne2006matters}
\bibinfo{author}{Thomas \surnamestart Kühne\surnameend}
  (\bibinfo{year}{2006}): \emph{\bibinfo{title}{{Matters of (Meta-)
  Modeling}}}.
\newblock {\sl \bibinfo{journal}{Software and Systems Modeling}}
  \bibinfo{volume}{5}(\bibinfo{number}{4}), p. \bibinfo{pages}{369–385},
  \doi{10.1007/s10270-006-0017-9}.

\bibitemdeclare{article}{lamport1995write}
\bibitem{lamport1995write}
\bibinfo{author}{L.~\surnamestart Lamport\surnameend} (\bibinfo{year}{1995}):
  \emph{\bibinfo{title}{How to write a proof}}.
\newblock {\sl \bibinfo{journal}{The American mathematical monthly}}
  \bibinfo{volume}{102}(\bibinfo{number}{7}), pp. \bibinfo{pages}{600--608},
  \doi{10.2307/2974556}.

\bibitemdeclare{article}{maoz2011semantically}
\bibitem{maoz2011semantically}
\bibinfo{author}{Shahar \surnamestart Maoz\surnameend},
  \bibinfo{author}{Jan~Oliver \surnamestart Ringert\surnameend} \&
  \bibinfo{author}{Bernhard \surnamestart Rumpe\surnameend}
  (\bibinfo{year}{2011}): \emph{\bibinfo{title}{{Semantically Configurable
  Consistency Analysis for Class and Object Diagrams}}}.
\newblock {\sl \bibinfo{journal}{Lecture Notes in Computer Science}}, p.
  \bibinfo{pages}{153–167}, \doi{10.1007/978-3-642-24485-8\_12}.

\bibitemdeclare{article}{omg2011mof}
\bibitem{omg2011mof}
\bibinfo{author}{\surnamestart OMG\surnameend} (\bibinfo{year}{2011}):
  \emph{\bibinfo{title}{{OMG Meta Object Facility (MOF) Core Specification
  (Version 2.4.1)}}}.
\newblock {\sl \bibinfo{journal}{Available on:
  http://www.omg.org/spec/MOF/2.4.1}} \bibinfo{volume}{2.4.1}.

\bibitemdeclare{misc}{omg2012ocl}
\bibitem{omg2012ocl}
\bibinfo{author}{\surnamestart OMG\surnameend} (\bibinfo{year}{2012}):
  \emph{\bibinfo{title}{{OMG Object Constraint Language (OCL), Version
  2.3.1}}}.
\newblock \urlprefix\url{http://www.omg.org/spec/OCL/2.3.1/}.

\bibitemdeclare{article}{2011.314}
\bibitem{2011.314}
\bibinfo{author}{Celia \surnamestart Picard\surnameend} \&
  \bibinfo{author}{Ralph \surnamestart Matthes\surnameend}
  (\bibinfo{year}{2011}): \emph{\bibinfo{title}{{Coinductive Graph
  Representation: the Problem of Embedded Lists}}}.
\newblock {\sl \bibinfo{journal}{ECEASST}} \bibinfo{volume}{39}.
\newblock
  \urlprefix\url{http://journal.ub.tu-berlin.de/eceasst/article/view/649}.

\bibitemdeclare{article}{DBLP:conf/sac/Poernomo06}
\bibitem{DBLP:conf/sac/Poernomo06}
\bibinfo{author}{Iman \surnamestart Poernomo\surnameend}
  (\bibinfo{year}{2006}): \emph{\bibinfo{title}{{The meta-object facility
  typed}}}.
\newblock {\sl \bibinfo{journal}{Proceedings of the 2006 ACM symposium on
  Applied computing - SAC ’06}}, \doi{10.1145/1141277.1141710}.

\bibitemdeclare{article}{DBLP:conf/icmt/Poernomo08}
\bibitem{DBLP:conf/icmt/Poernomo08}
\bibinfo{author}{Iman \surnamestart Poernomo\surnameend}
  (\bibinfo{year}{2008}):
  \emph{\bibinfo{title}{{Proofs-as-Model-Transformations}}}.
\newblock {\sl \bibinfo{journal}{Lecture Notes in Computer Science}}, p.
  \bibinfo{pages}{214–228}, \doi{10.1007/978-3-540-69927-9\_15}.

\bibitemdeclare{article}{DBLP:conf/icfem/PoernomoT10}
\bibitem{DBLP:conf/icfem/PoernomoT10}
\bibinfo{author}{Iman \surnamestart Poernomo\surnameend} \&
  \bibinfo{author}{Jeffrey \surnamestart Terrell\surnameend}
  (\bibinfo{year}{2010}): \emph{\bibinfo{title}{{Correct-by-Construction Model
  Transformations from Partially Ordered Specifications in Coq}}}.
\newblock {\sl \bibinfo{journal}{Lecture Notes in Computer Science}}, p.
  \bibinfo{pages}{56–73}, \doi{10.1007/978-3-642-16901-4\_6}.

\bibitemdeclare{article}{DBLP:journals/jot/RomeroRDV07}
\bibitem{DBLP:journals/jot/RomeroRDV07}
\bibinfo{author}{Jos{\'e}~Ra{\'u}l \surnamestart Romero\surnameend},
  \bibinfo{author}{Jos{\'e}~Eduardo \surnamestart Rivera\surnameend},
  \bibinfo{author}{Francisco \surnamestart Dur{\'a}n\surnameend} \&
  \bibinfo{author}{Antonio \surnamestart Vallecillo\surnameend}
  (\bibinfo{year}{2007}): \emph{\bibinfo{title}{Formal and Tool Support for
  {Model Driven Engineering} with {Maude}}}.
\newblock {\sl \bibinfo{journal}{Journal of Object Technology}}
  \bibinfo{volume}{6}(\bibinfo{number}{9}), pp. \bibinfo{pages}{187--207},
  \doi{10.5381/jot.2007.6.9.a10}.

\bibitemdeclare{inproceedings}{towers07}
\bibitem{towers07}
\bibinfo{author}{X.~\surnamestart Thirioux\surnameend},
  \bibinfo{author}{B.~\surnamestart Combemale\surnameend},
  \bibinfo{author}{X.~\surnamestart Cr{\'e}gut\surnameend} \&
  \bibinfo{author}{P.L. \surnamestart Garoche\surnameend}
  (\bibinfo{year}{2007}): \emph{\bibinfo{title}{{A Framework to Formalise the
  MDE Foundations}}}.
\newblock In \bibinfo{editor}{Richard \surnamestart Paige\surnameend} \&
  \bibinfo{editor}{Jean \surnamestart B{\'e}zivin\surnameend}, editors: {\sl
  \bibinfo{booktitle}{International Workshop on Towers of Models (TOWERS)}},
  \bibinfo{address}{Zurich}, pp. \bibinfo{pages}{14--30}.

\bibitemdeclare{article}{DBLP:conf/icmt/TroyaV10}
\bibitem{DBLP:conf/icmt/TroyaV10}
\bibinfo{author}{Javier \surnamestart Troya\surnameend} \&
  \bibinfo{author}{Antonio \surnamestart Vallecillo\surnameend}
  (\bibinfo{year}{2010}): \emph{\bibinfo{title}{{Towards a Rewriting Logic
  Semantics for ATL}}}.
\newblock {\sl \bibinfo{journal}{Lecture Notes in Computer Science}}, p.
  \bibinfo{pages}{230–244}, \doi{10.1007/978-3-642-13688-7\_16}.

\end{thebibliography}

\appendix
\vspace{-0.7cm}
\section{A part from the ValidBind theorem proof}
\vspace{-0.4cm}
This appendix presents the mathematical proof for the first theorem presented in the Section~\ref{V}. The theorem \emph{ValidBind} proves the preservation of the $instanceOf$ property by the \texttt{bind} operator. The Lamport's method \cite{lamport1995write} is used to write this proof. 

\begin{theorem}\emph{(ValidBind)}\label{ValidBind}\\
$~~~~~~~InstanceOf~  (\mathtt{M_1}, \mathtt{MM}) ~ \wedge ~InstanceOf ~ (\mathtt{M_2}, \mathtt{MM}) 
\rightarrow InstanceOf ~ ((bind  ~ \mathtt{o_1}~\mathtt{o_2}~\mathtt{M_1}  ~ \mathtt{M_2}), \mathtt{MM})$
\end{theorem}

\begin{proof}

\assume{$M_1$ the $\model \langle \mathtt{MV}, \mathtt{ME} \rangle$\\
		$M_2$ the $\model \langle \mathtt{MV_1}, \mathtt{ME_1} \rangle$\\
		 $MM$ the $\metamodel \langle \mathtt{MMV}, \mathtt{MME} ,conformsTo \rangle$\\
		 $H$: $InstanceOf~(\langle \mathtt{MV}, \mathtt{ME} \rangle, \langle \mathtt{MMV}, \mathtt{MME} ,conformsTo \rangle)$.\\
		 $H_{M2}$: $InstanceOf~(\langle \mathtt{MV_1}, \mathtt{ME_1} \rangle, \langle \mathtt{MMV}, \mathtt{MME} ,conformsTo \rangle)$.
}

\prove{
$InstanceOf ~ ((bind  ~ \mathtt{o_1}~\mathtt{o_2}~\langle \mathtt{MV}, \mathtt{ME} \rangle  ~ \langle \mathtt{MV_1}, \mathtt{ME_1} \rangle), \langle \mathtt{MMV}, \mathtt{MME} ,conformsTo \rangle)$
}

\pfsketch{
We suppose that the two models $M_1$ and $M_2$ are instance of the metamodel $MM$  and we prove that the model obtained by applying the $bind$ operator on the two models using two model's elements $o_1$ and $o_2$ is also instance of the metamodel $MM$. We verify first that $o_1$ is a $Hook$ for the model $M_1$ and $o_2$ is a $Prototype$ for the model $M_2$ and that $o_1$ and $o_2$ have the same types otherwise the $bind$ returns the model $M_1$ and the proof is trivial. In case of $o_1$ is a $Hook$ and $o_2$ is a $Prototype$ and the two model's elements have the same type (the case detailed below), we show that the $bind$ does not change the types of the vertices and edges and so preserves the type safety.
}

\pf

\step{<1>}{After introducing the definitions of $instanceOf$ and the $bind$ operator, the hypothesis $H$ becomes:

$H$: $(\forall \langle \mathtt{o},\mathtt{c} \rangle, \langle \mathtt{o},\mathtt{c} \rangle \in \mathtt{MV} \rightarrow \mathtt{c} \in \mathtt{MMV})\\
 \wedge
(\forall \langle \langle \mathtt{o},\mathtt{c} \rangle , \mathtt{r}, \langle \mathtt{o'},\mathtt{c'} \rangle \rangle, \langle \langle \mathtt{o},\mathtt{c} \rangle , \mathtt{r}, \langle \mathtt{o'},\mathtt{c'} \rangle \rangle \in \mathtt{ME} \rightarrow \langle \mathtt{c},\mathtt{r},\mathtt{c'} \rangle \in \mathtt{MME})$.

The current goal is transformed:

$(\forall \langle \mathtt{o},\mathtt{c} \rangle, \langle \mathtt{o},\mathtt{c} \rangle \in ( V.image~mapv~(\langle \mathtt{MV}, \mathtt{ME} \rangle )~g) \rightarrow \mathtt{c} \in \mathtt{MMV})\\
 \wedge
(\forall \langle \langle \mathtt{o},\mathtt{c} \rangle , \mathtt{r}, \langle \mathtt{o'},\mathtt{c'} \rangle \rangle, \langle \langle \mathtt{o},\mathtt{c} \rangle , \mathtt{r}, \langle \mathtt{o'},\mathtt{c'} \rangle \rangle \in (E.image~mapv~mapa~(\langle \mathtt{MV}, \mathtt{ME} \rangle )~g)\\
 \rightarrow \langle \mathtt{c},\mathtt{r},\mathtt{c'} \rangle \in \mathtt{MME})$.}

\step{<2>}{The hypothesis $H$ is divided into two hypotheses:

$H_0$:
$\begin{array}{l}\label{H0}
\forall \langle \mathtt{o},\mathtt{c} \rangle, \langle \mathtt{o},\mathtt{c} \rangle \in \mathtt{MV} \rightarrow \mathtt{c} \in \mathtt{MMV}
\end{array}$.

$H_1$:
$\begin{array}{l}\label{H1}
\forall \langle \langle \mathtt{o},\mathtt{c} \rangle , \mathtt{r}, \langle \mathtt{o'},\mathtt{c'} \rangle \rangle, \langle \langle \mathtt{o},\mathtt{c} \rangle , \mathtt{r}, \langle \mathtt{o'},\mathtt{c'} \rangle \rangle \in \mathtt{ME} \rightarrow \langle \mathtt{c},\mathtt{r},\mathtt{c'} \rangle \in \mathtt{MME}
\end{array}$.

The current goal is divided into two sub-goals:
\begin{pfenum}
\item $\langle \mathtt{o},\mathtt{c} \rangle \in ( V.image~mapv~(\langle \mathtt{MV}, \mathtt{ME} \rangle )~g) \rightarrow \mathtt{c} \in \mathtt{MMV}$

\item $\forall \langle \langle \mathtt{o},\mathtt{c} \rangle , \mathtt{r}, \langle \mathtt{o'},\mathtt{c'} \rangle \rangle, \langle \langle \mathtt{o},\mathtt{c} \rangle , \mathtt{r}, \langle \mathtt{o'},\mathtt{c'} \rangle \rangle \in (E.image~mapv~mapa~(\langle \mathtt{MV}, \mathtt{ME} \rangle )~g)\\
 \rightarrow \langle \mathtt{c},\mathtt{r},\mathtt{c'} \rangle \in \mathtt{MME}$
\end{pfenum}
}

\begin{proof}

\step{<3>}{We begin by proving the first subgoal that corresponds to the left side of the conjunction:

\assume{ $H_2$: 
$\begin{array}{l}
\langle \mathtt{o},\mathtt{c} \rangle \in ( V.image~mapv~(\langle \mathtt{MV}, \mathtt{ME} \rangle )~g)
\end{array}$.
}

\prove{$\begin{array}{l}
(\mathtt{c} \in \mathtt{MMV})
\end{array}$}

\pf

\begin{proof}
\step{<3>1}{By generalizing the lemma~\ref{VImageElim} using $H_2$, we get a new hypothesis:
$H_4$:
$\exists (o', c') \in MV \mid mapv~(o', c') = (o, c)$.}

\step{<3>2}{We introduce the definition of $mapv$, we can conclude that $c'=c$

then we have as hypothesis: $H_5: (o', c) \in MV$.}

\step{<3>3}{By applying $H_0$ with as parameter $(o',c)$ and $H_5$.}

\qedstep

\end{proof}
}

\step{4}{We now prove the second part of the goal:

$\begin{array}{l}
\forall \langle \langle \mathtt{o},\mathtt{c} \rangle , \mathtt{r}, \langle \mathtt{o'},\mathtt{c'} \rangle \rangle, \langle \langle \mathtt{o},\mathtt{c} \rangle , \mathtt{r}, \langle \mathtt{o'},\mathtt{c'} \rangle \rangle \in (E.image~mapv~mapa~(\langle \mathtt{MV}, \mathtt{ME} \rangle )~g)\\
 \rightarrow \langle \mathtt{c},\mathtt{r},\mathtt{c'} \rangle \in \mathtt{MME}
\end{array}$

\begin{proof}

\assume{Having as an additional hypothesis to $H_0$ and $H_1$, the hypothesis $H_2$:
$
\forall \langle \langle \mathtt{o},\mathtt{c} \rangle , \mathtt{r}, \langle \mathtt{o'},\mathtt{c'} \rangle \rangle, \\
\langle \langle \mathtt{o},\mathtt{c} \rangle , \mathtt{r}, \langle \mathtt{o'},\mathtt{c'} \rangle \rangle \in (E.image~mapv~mapa~(\langle \mathtt{MV}, \mathtt{ME} \rangle )~g)$}

\prove{This sub-goal can be resolved by proving:$\begin{array}{l}
\langle \mathtt{c},\mathtt{r},\mathtt{c'} \rangle \in \mathtt{MME}
\end{array}$}

\pf

\step{4.1}{Here, we generalize the lemma~\ref{EImageElim} using the hypothesis $H_2$, we get a new hypothesis:
$H_4$:
$\exists \langle \langle \mathtt{o_1},\mathtt{c_1} \rangle , \mathtt{r_1}, \langle \mathtt{o_1'},\mathtt{c_1'} \rangle \rangle,\\
 \langle \langle \mathtt{o_1},\mathtt{c_1} \rangle , \mathtt{r_1}, \langle \mathtt{o_1'},\mathtt{c_1'} \rangle \rangle \in \mathtt{ME} \mid mape~\langle \langle \mathtt{o_1},\mathtt{c_1} \rangle , \mathtt{r_1}, \langle \mathtt{o_1'},\mathtt{c_1'} \rangle \rangle = \langle \langle \mathtt{o},\mathtt{c} \rangle , \mathtt{r}, \langle \mathtt{o'},\mathtt{c'} \rangle \rangle$.}

\step{4.2}{We introduce the definition of $mape$, we can conclude that:\\
 $c_1=c$, $c_1'=c'$ et $r_1=r$, 

then we have as hypothesis: $H_5: \langle \langle \mathtt{o_1},\mathtt{c} \rangle , \mathtt{r}, \langle \mathtt{o_1'},\mathtt{c'} \rangle \rangle \in \mathtt{ME}$.}

\step{4.3}{By applying $H_0$ with as parameter $\langle \langle \mathtt{o_1},\mathtt{c} \rangle , \mathtt{r}, \langle \mathtt{o_1'},\mathtt{c'} \rangle \rangle$ and $H_5$.}

\qedstep

\end{proof}

}

\qed

\end{proof}

\end{proof}
Lemmas used in this proof are:

\begin{lemma}\emph{(V.imageElim)}\label{VImageElim}\\
$\begin{array}{l}
\forall~mapv,~mapa,~\langle \mathtt{MV}, \mathtt{ME} \rangle \in Model ,
v \in (image~\langle \mathtt{MV}, \mathtt{ME} \rangle)
\rightarrow \exists w \in MV \wedge mapv~w = v.
\end{array}$
\end{lemma}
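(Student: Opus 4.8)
The plan is to prove this image-elimination (inversion) lemma by structural induction on the finite vertex set $MV$ of the model, after unfolding the definition of $V.image$. The operation $V.image~mapv~\langle \mathtt{MV}, \mathtt{ME}\rangle~g$ builds the target vertex set by applying $mapv$ pointwise to every vertex of $MV$; since the finite sets of the deep embedding are defined inductively, this image is computed by recursion over that structure, which is exactly what I would exploit. First I would unfold $V.image$ and reduce the goal to a statement purely about the underlying set-map combinator: membership of $v$ in the pointwise image of $MV$ under $mapv$ implies the existence of a preimage $w \in MV$ with $mapv~w = v$. The extra argument $g$ plays no role in the vertex component beyond fixing the ambient target, so it can be carried along untouched.

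Then I would perform the induction on $MV$. In the base case $MV$ is empty, the image is empty, so the hypothesis $v \in V.image~mapv~\langle \mathtt{MV}, \mathtt{ME}\rangle~g$ is absurd and the goal follows vacuously. In the inductive step, $MV$ is obtained by inserting a vertex $u$ into a smaller set $MV'$, and by the definitional behaviour of the set-map combinator the image splits as $\{mapv~u\} \cup V.image~mapv~\langle \mathtt{MV'}, \mathtt{ME}\rangle~g$. A case analysis on which side of this union contains $v$ closes the step: if $v = mapv~u$ I take $w := u$, which lies in $MV$ by the insertion; otherwise $v$ lies in the image of $MV'$, and the induction hypothesis supplies a witness $w \in MV' \subseteq MV$ with $mapv~w = v$.

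The main obstacle, as the paper repeatedly stresses, is the dependent-typed encoding of the finite sets and multigraphs: the clean insert/union view of $MV$ used above is not literally how the structure is presented, so I would first have to establish (or reuse from the library) the computation rules relating $V.image$ to the constructors of the finite-set type, together with the matching membership introduction and elimination lemmas for set union and for insertion. In particular, the step $v \in \{mapv~u\} \cup V.image~mapv~\langle \mathtt{MV'}, \mathtt{ME}\rangle~g \rightarrow v = mapv~u \vee v \in V.image~mapv~\langle \mathtt{MV'}, \mathtt{ME}\rangle~g$ must be available definitionally rather than assumed. Once those generic combinator lemmas are in place, the induction above is routine; the genuine work is marshalling the dependent-type machinery so that the recursion on $MV$ and the membership reasoning line up, which is also why an entirely analogous $E.imageElim$ lemma (used for the edge component in the proof of \texttt{ValidBind}) is proved separately by the same induction on $ME$.
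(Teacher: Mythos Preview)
Your proposal is correct and follows essentially the same approach as the paper, which states only that the lemma is proved ``by induction on the structure of the graph and involve[s] other theorems that are not presented here but are available with our \coq code.'' Your structural induction on $MV$, the vacuous base case, and the insert/union case split in the inductive step are exactly the expected shape of such an induction, and your remark about needing auxiliary combinator lemmas for the dependent-typed finite-set encoding matches the paper's reference to unstated supporting theorems.
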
\label{V.imageElim}

\begin{lemma}\emph{(E.imageElim)}\\
$\begin{array}{l}
\forall~mapv,~mapa,~\langle \mathtt{MV}, \mathtt{ME} \rangle \in Model ,
e \in (image~\langle \mathtt{MV}, \mathtt{ME} \rangle)
\rightarrow \exists w \in \mathtt{ME} \wedge mape~w = e.
\end{array}$
\end{lemma}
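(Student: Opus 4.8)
The plan is to prove the image-elimination lemma for edges; the vertex version, Lemma~(V.imageElim), is obtained in exactly the same way, and the two together are precisely what the \texttt{ValidBind} argument above consumes each time it inverts an $E.image$ (resp.\ $V.image$) membership to produce the hypothesis $H_4$. First I would unfold the definition of the image operator: applied to a model $\langle MV, ME\rangle$ it returns a new model whose vertex set is the pointwise $mapv$-image of $MV$, whose edge set is the pointwise image of $ME$ under the derived edge map $mape$ (which relabels the two endpoints through $mapv$ and the reference through $mapa$), together with a proof that the result is still a well-formed multigraph. After this unfolding, the hypothesis $e \in E.image~mapv~mapa~\langle MV, ME\rangle$ reduces by computation to the statement that $e$ is a member of the finite set $\{\, mape~w \mid w \in ME \,\}$.

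The heart of the proof is then the generic characterisation of the pointwise image of a finite set: for any function $f$ and finite set $S$, a value $y$ lies in $\{\, f~w \mid w \in S \,\}$ if and only if $\exists w,\ w \in S \wedge f~w = y$. I would establish this by induction on the structure of $S$, following the recursion scheme by which the image is defined: in the empty case the image is empty and membership of $y$ is absurd; in the step case, where $S$ is obtained by adjoining one element $a$ to a smaller set $S'$, membership of $y$ in the image of $S$ splits into $y = f~a$ --- in which case $w := a$ is the witness --- or $y$ lies in the image of $S'$, in which case the induction hypothesis supplies the witness. Instantiating $f := mape$, $S := ME$ and $y := e$ gives precisely the desired $\exists w,\ w \in ME \wedge mape~w = e$.

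The main obstacle I anticipate is not this set-theoretic content, which is routine, but the dependent-type encoding of models and multigraphs in \coqformde that the paper repeatedly flags. Finite sets carry their no-duplication and membership evidence inside their types, so the image operation has to transport that evidence, and relabelling along $mape$ must be reconciled with the well-formedness side conditions of the target graph (each produced edge's endpoints must belong to the mapped vertex set, and each produced reference must be admissible). In particular, the equalities $c_1 = c$, $r_1 = r$, $c_1' = c'$ that the \texttt{ValidBind} proof extracts immediately after obtaining $H_4$ ultimately rely on injectivity of the pair constructors modulo the decidable equalities on \texttt{Classes}, \texttt{References} and \texttt{Objects}, and possibly on proof irrelevance for the embedded membership certificates. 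Packaging the elimination lemma once so that it discharges this bookkeeping --- rather than forcing it to be redone at every call site --- is the delicate engineering step; once it is available, both \texttt{V.imageElim} and \texttt{E.imageElim} are immediate and the \texttt{ValidBind} proof closes exactly as sketched.
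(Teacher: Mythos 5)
Your proposal is correct and matches the paper's approach: the paper only states that both image-elimination lemmas are proved ``by induction on the structure of the graph,'' and your induction on the inductive construction of the edge set (empty case plus adjoin case), after unfolding the image operator, is exactly that argument spelled out. Your additional remarks about the dependent-type bookkeeping are consistent with the difficulties the paper itself flags elsewhere.
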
\label{EImageElim}
The proofs of these two lemmas are constructed by induction on the structure of the graph and involve other theorems that are not presented here but are available with our \coq code.
\end{document}